\documentclass[12pt]{article}

\usepackage[margin=1in]{geometry}
\usepackage{amsmath,amssymb,amsthm,mathtools,bm}
\usepackage{booktabs}
\usepackage{enumitem}
\usepackage{natbib}
\usepackage[colorlinks=true,linkcolor=blue,citecolor=blue,urlcolor=blue]{hyperref}
\usepackage[nameinlink,noabbrev]{cleveref}
\usepackage{setspace}
\newcommand{\E}{\mathbb E}
\newcommand{\Var}{\operatorname{Var}}
\newcommand{\Cov}{\operatorname{Cov}}
\newcommand{\AVar}{\operatorname{AVar}}
\newcommand{\diag}{\operatorname{diag}}

\newcommand{\R}{\mathbb R}
\newcommand{\p}{\overset{p}{\to}}
\newcommand{\dlim}{\overset{d}{\to}}

\newtheorem{theorem}{Theorem}
\newtheorem{proposition}{Proposition}
\newtheorem{lemma}{Lemma}
\newtheorem{corollary}{Corollary}
\newtheorem{assumption}{Assumption}

\newtheorem{remark}{Remark}

\crefname{assumption}{Assumption}{Assumptions}
\Crefname{assumption}{Assumption}{Assumptions}

\title{When Does Heteroskedasticity Matter?\\
\large A Contrast-Specific Theory of Robust Inference}
\author{
Ulrich Hounyo\thanks{%
Department of Economics, University at Albany -- State University of New
York, Albany, NY 12222, USA. E-mail: khounyo@albany.edu. }\\
Department of Economics\\
University at Albany, SUNY
}

\date{\today}

\begin{document}
\maketitle

\begin{abstract}
Conventional heteroskedasticity diagnostics ask whether the conditional variance of the regression disturbance varies with covariates. This paper asks a different question: when does that variation matter for inference on the estimand of interest? The paper develops a contrast-specific theory characterizing when covariance perturbations are inferentially relevant. We show that, for any linear contrast $a'\beta$ in a linear regression, the difference between the heteroskedasticity-robust variance and the pooled fixed-design variance is governed by the empirical covariance between conditional error variance and a contrast-specific leverage score. Thus, heteroskedasticity may be present in the model yet first-order irrelevant for a particular coefficient or linear combination. Conversely, modest heteroskedasticity may have a large inferential effect if it is concentrated on observations that are highly informative for the contrast of interest. We characterize the effect exactly through a heteroskedasticity relevance ratio and a standard-error inflation factor, relate the result to pairs and residual bootstrap procedures, and extend the decomposition to general covariance structures, where off-diagonal dependence contributes a separate contrast-specific term. The results provide a unified way to understand why robust, clustered, and bootstrap standard errors can differ across coefficients in the same regression.
\end{abstract}

\noindent\textbf{Keywords:} heteroskedasticity, robust inference, contrast-specific inference, directional leverage, pairs bootstrap, residual bootstrap, clustered dependence.\\
\textbf{JEL codes:} C12, C13, C21, C23.
\vfill{}

\thispagestyle{empty}

\pagebreak{}
\section{Introduction}
\label{sec:introduction}

Applied researchers working with heterogeneous data commonly encounter a puzzling pattern: within the same regression, robust standard errors may differ dramatically from classical standard errors for some coefficients while barely moving for others. A researcher estimating a wage equation might find that the coefficient on education has a robust standard error essentially identical to its classical counterpart, while the coefficient on experience has a robust standard error forty percent larger. Yet both coefficients are estimated from the same regression, with the same heteroskedastic errors, and the same robust covariance matrix. Standard theory gives no guidance on which coefficients will be affected and by how much. Existing diagnostics test for heteroskedasticity in the regression as a whole; none identifies which estimands it actually threatens.

This paper resolves the puzzle. We show that whether heteroskedasticity affects inference on a given coefficient is a property of that coefficient's \emph{relationship to the variance structure}, not a property of the regression model as a whole. The same heteroskedasticity pattern can simultaneously matter a great deal for one estimand and be entirely irrelevant for another. The relevant object is an alignment between the error variance profile and what we call the \emph{directional leverage} of the estimand. Put differently, inference is determined not simply by where heteroskedasticity occurs, but by how the heteroskedasticity pattern aligns with the observations that carry information for the estimand. This alignment perspective is the organizing principle of the paper. In this sense, directional leverage measures where the information for the estimand comes from, whereas heteroskedasticity describes where uncertainty is concentrated. Inference is therefore determined not by where heteroskedasticity occurs, but by where heteroskedasticity occurs relative to the information supporting the estimand. If observations with large conditional error variances receive little informational weight for the estimand, heteroskedasticity may have essentially no first-order effect on inference despite being present throughout the model. Conversely, even modest heteroskedasticity can substantially affect inference when uncertainty is concentrated on the observations that are most informative for the estimand.

\paragraph{The main result.}
For any linear contrast $a'\beta$ in a linear regression with heteroskedastic errors, the difference between the heteroskedasticity-robust variance and the pooled fixed-design variance satisfies the exact identity
\[
    V_H(a)-V_R(a)
    =
    \sum_{i=1}^{n}\ell_i(a)\left(\sigma_i^2-\bar\sigma^2\right)
    =
    n\,\Cov_n\!\left(\ell_i(a),\,\sigma_i^2\right),
\]
where $\ell_i(a)=\{a'(X'X)^{-1}x_i\}^2$ is the \emph{directional leverage} of observation $i$ for the contrast $a'\beta$. Heteroskedasticity inflates inference on $a'\beta$ when observations with above-average error variance are also above-average in directional leverage, and deflates it when the opposite holds. When the two are empirically uncorrelated, heteroskedasticity is present in the data but first-order irrelevant for this particular contrast. This characterization is exact rather than asymptotic: every first-order effect of heteroskedasticity on the contrast operates through this single covariance.

At first sight, one might expect heteroskedasticity to affect every contrast once it is present in the model. Our analysis shows that this intuition is incorrect. The same variance structure can substantially inflate the uncertainty of one contrast while leaving another essentially unchanged, because different contrasts extract information from different subsets of the data.

The central object of the paper is the heteroskedasticity relevance ratio
\[
    R(a)=V_H(a)/V_R(a),
\]
defined as the ratio between the heteroskedastic and pooled variance benchmarks.
Together with the corresponding standard-error inflation factor
\(S(a)=\sqrt{R(a)}\), it quantifies the practical relevance of heteroskedasticity
for the contrast \(a'\beta\). We show that \(R(a)\) has a weighted-average
representation: it equals the contrast-specific information-weighted average of
\(\sigma_i^2/\bar\sigma^2\), where the weights
\(w_i(a)=\ell_i(a)/\sum_j\ell_j(a)\) assign greater weight to observations that
contribute more information to the contrast.  Consequently, the sign and magnitude of \(R(a)-1\) admit an immediate
interpretation: they quantify the extent to which the heteroskedasticity pattern
is \emph{aligned with} or \emph{against} the observations supporting the estimand.

\paragraph{Bootstrap implications.}
The directional identity gives a new interpretation of the residual and pairs bootstrap. The residual bootstrap pools all residuals into a common distribution and then resamples from it, mechanically breaking any association between $x_i$ and $u_i$. The residual bootstrap reproduces the pooled-variance benchmark exactly in the
bootstrap world, yielding the exact conditional finite-sample identity
\[
Var^*(a'\hat\beta_R^*|X)=\widehat V_R(a).
\] The pairs bootstrap resamples $(x_i,y_i)$ jointly, preserving that association asymptotically, and consistently estimates asymptotically $V_H(a)$. Their ratio equals $\widehat R(a)$. Contrary to standard intuition, neither bootstrap is uniformly more conservative: which yields larger standard errors depends on the sign of $\mathrm{Cov}_n(\ell_i(a),\hat u_i^2)$ for the specific contrast under study. We show both in theory and via simulation that there exist natural designs where the residual bootstrap is the more conservative procedure for some contrasts and the pairs bootstrap is for others — within the same data set.

\paragraph{Why this matters even though heteroskedasticity-robust estimators are always valid.}
Since the White estimator is asymptotically valid regardless of the underlying variance structure, one might naturally ask whether the heteroskedasticity relevance ratio $R(a)$ has practical significance beyond simply describing the discrepancy between robust and classical inference. The answer is affirmative. While the White estimator remains asymptotically valid in all cases, $R(a)$ determines when heteroskedasticity has a substantively important effect on inference for the estimand of interest, quantifies the magnitude of that effect, and explains when robust, classical, and bootstrap procedures yield materially different conclusions.  We show this objection, while correct about asymptotic validity, misses two practical consequences. First, validity is not the same as conservatism: applied researchers commonly treat robust standard errors as a conservative alternative to classical ones, but this is only true when $R(a)>1$; when $R(a)<1$, robust inference is \emph{less} conservative than classical inference, and we show in simulation that the classical test can have severely distorted size in this regime. Second, $R(a)$ governs the relative efficiency of robust versus classical standard-error estimates: there is no efficiency cost to using the robust estimator when $R(a)\approx1$, but a real one when $R(a)$ is far from one, connecting our results directly to the efficiency critique of \citet{KingRoberts2015}. \Cref{rem:validity} develops this point with explicit size and efficiency calculations.

\paragraph{Contributions and scope.}
This paper makes four contributions.
Conceptually, we introduce directional leverage and show that it fully characterizes the estimand-specific relevance of heteroskedasticity (\Cref{sec:linear_regression}).
Theoretically, we define the heteroskedasticity relevance ratio $R(a)$ and establish its information-weighting representation and calibration bounds (\Cref{sec:ratio}).
Methodologically, we derive a scalar Lindeberg--Feller central limit theorem for the estimator $\widehat R(a)$, yielding confidence intervals and tests for contrast-specific relevance (\Cref{sec:asymptotics}).
More broadly, we extend the framework to general covariance structures — clustering, serial correlation, spatial and network dependence — where the same alignment principle governs the estimand-specific relevance of dependence (\Cref{sec:dependence}).

\paragraph{Related literature.}
The first strand concerns heteroskedasticity-consistent covariance estimation.
\citet{White1980} showed that valid inference remains possible under unknown
heteroskedasticity; subsequent work has focused on finite-sample improvements,
yielding the HC1--HC5 family and related refinements
\citep{MacKinnonWhite1985,ChesherAustin1991,LongErvin2000,CribariNeto2004};
see \citet{MacKinnon2013} for a survey. Existing work in this literature
primarily asks how to estimate the heteroskedasticity-robust covariance matrix
accurately under increasingly general forms of heteroskedasticity and
dependence. Our question is different. Rather than improving covariance
estimation, we ask when the departure from the classical covariance benchmark
is substantively relevant for a particular inferential objective.

The second strand concerns bootstrap inference. Beginning with
\citet{Freedman1981}, a large literature has studied the pairs, residual, and
wild bootstraps
\citep{Wu1986,Liu1988,Mammen1993,Hall1992,DavidsonMacKinnon2004,DavidsonMacKinnon2006,MacKinnon2013}.
This literature typically takes the asymptotic variance of the limiting
distribution as the inferential target and studies whether bootstrap procedures
consistently estimate that variance, together with their higher-order
properties. Recent work by \citet{HahnLiao2021} further shows that bootstrap
second-moment estimators generally lead to conservative inference under this
conventional framework. Our analysis addresses a complementary and logically
prior question. Rather than taking the variance target as given, we
characterize when different asymptotically valid variance measures are
relevant for a particular estimand. Once the appropriate inferential target has
been identified, the relative behavior of classical and bootstrap standard
errors follows naturally from the heteroskedasticity relevance ratio developed
in this paper.

The third strand concerns robust inference under general dependence,
including HAC, cluster-robust, and spatially robust covariance estimation
\citep{NeweyWest1987,Andrews1991,CameronGelbachMiller2011}. We complement this
literature by showing that the relevance of these covariance structures is
likewise contrast-specific: the same dependence structure can inflate the
variance of one contrast while reducing that of another relative to the
independence benchmark. Thus, the practical importance of dependence, like that
of heteroskedasticity, depends on how the dependence pattern aligns with the
information supporting the estimand.

Finally, \citet{KingRoberts2015} argue that robust standard errors often reveal
model misspecification rather than merely correcting it. Our analysis is
complementary. Conditional on the maintained specification, we characterize
when robust and classical covariance measures differ in ways that are
substantively important for the inferential objective. Consequently, our
framework separates the validity of robust covariance estimation from its
practical relevance for the particular estimand under consideration.

The rest of the paper is organized as follows. \Cref{sec:general_framework} presents a general covariance-relevance principle. \Cref{sec:linear_regression} derives the directional leverage identity. \Cref{sec:ratio} studies the heteroskedasticity relevance ratio and its information-weighting representation. \Cref{sec:bootstrap} interprets the identity through the lens of pairs and residual bootstrap. \Cref{sec:dependence} extends to general covariance structures. \Cref{sec:asymptotics} provides the asymptotic theory for $\widehat R(a)$. \Cref{sec:examples} presents the Monte Carlo study. \Cref{sec:empirical} applies the framework to the Boston Housing data. \Cref{sec:practical} offers practical guidance for applied researchers. \Cref{sec:conclusion} concludes.

\section{A Covariance-Relevance Principle}
\label{sec:general_framework}

We begin with a general observation. Let $\hat\theta$ be an estimator of a finite-dimensional parameter $\theta_0\in\R^p$ satisfying the asymptotic linear representation
\begin{equation}
    \sqrt n(\hat\theta-\theta_0)
    =
    \frac{1}{\sqrt n}\sum_{i=1}^n \psi_i + o_p(1),
\label{eq:AL}
\end{equation}
where $\E(\psi_i)=0$. Let
\[
    \Omega=\AVar\{\sqrt n(\hat\theta-\theta_0)\}.
\]
For a scalar contrast $a'\theta_0$, the asymptotic variance is $a'\Omega a$. If the covariance matrix is written as a baseline component plus a perturbation,
\[
    \Omega=\Omega_0+\Delta,
\]
then the relevance of $\Delta$ for inference on $a'\theta_0$ is governed by the scalar quadratic form $a'\Delta a$.

\begin{proposition}[Covariance-relevance principle]
\label{prop:cov_relevance}
Suppose \eqref{eq:AL} holds and $\Omega=\Omega_0+\Delta$, where $\Omega_0$ and $\Delta$ are symmetric matrices and $\Omega$ is positive semidefinite. Then, for every fixed $a\in\R^p$,
\[
    \AVar\{\sqrt n(a'\hat\theta-a'\theta_0)\}
    =
    a'\Omega_0 a+a'\Delta a.
\]
Consequently, the covariance perturbation $\Delta$ is first-order irrelevant for inference on $a'\theta_0$ if and only if $a'\Delta a=0$.
\end{proposition}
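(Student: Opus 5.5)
The plan is to reduce the claim to the continuous mapping theorem for linear maps and then read off the variance of the projected limit. Multiplying \eqref{eq:AL} on the left by the fixed vector $a'$ gives
\[
\sqrt n(a'\hat\theta-a'\theta_0)=\frac{1}{\sqrt n}\sum_{i=1}^n a'\psi_i+a'\,o_p(1),
\]
and since $a$ does not depend on $n$, the remainder $a'\,o_p(1)$ is again $o_p(1)$. So the scalar estimator $a'\hat\theta$ is asymptotically linear with influence function $a'\psi_i$, which has mean zero.

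Next I will identify the asymptotic variance. I interpret $\AVar$ in the sense implicit in the definition of $\Omega$: the variance of the limiting distribution of $\sqrt n(\hat\theta-\theta_0)$, namely $N(0,\Omega)$. Equivalently, it is the limit of $\Var\bigl(n^{-1/2}\sum_i\psi_i\bigr)$; under either reading the conclusion is the same. If $\sqrt n(\hat\theta-\theta_0)\dlim Z$ with $\Var(Z)=\Omega$, then the continuous mapping theorem gives $\sqrt n(a'\hat\theta-a'\theta_0)\dlim a'Z$, and
\[
\Var(a'Z)=a'\Omega a.
\]
Under the second reading, bilinearity of covariance gives
\[
\Var\Bigl(n^{-1/2}\sum_i a'\psi_i\Bigr)=a'\,\Var\Bigl(n^{-1/2}\sum_i\psi_i\Bigr)\,a\to a'\Omega a.
\]
Positive semidefiniteness of $\Omega$ ensures that $a'\Omega a\ge0$ is a legitimate variance. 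Substituting $\Omega=\Omega_0+\Delta$ and using linearity of the quadratic form in the matrix gives $a'\Omega a=a'\Omega_0a+a'\Delta a$. Symmetry of $\Omega_0$ and $\Delta$ is used only so that these quadratic forms are the natural objects and the split is well defined; no definiteness of $\Omega_0$ or $\Delta$ is needed individually.

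For the ``consequently'' part, I define first-order irrelevance of $\Delta$ for the contrast $a'\theta_0$ to mean that the asymptotic variance of $\sqrt n(a'\hat\theta-a'\theta_0)$ coincides with the baseline $a'\Omega_0 a$. Given the displayed identity, this holds if and only if $a'\Delta a=0$.

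The only step requiring any care is fixing the meaning of $\AVar$ and of ``first-order irrelevant.'' Once the meaning of $\AVar$ is fixed as above, the rest is a direct linear-algebra computation. I would state that interpretation explicitly at the start of the proof.
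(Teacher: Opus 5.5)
Your proposal is correct and follows the paper's proof: premultiply \eqref{eq:AL} by $a'$, identify the asymptotic variance of the contrast as $a'\Omega a$, and split it by linearity of the quadratic form into $a'\Omega_0a+a'\Delta a$. You also state explicitly what $\AVar$ and ``first-order irrelevance'' mean, which the paper leaves implicit. Strictly, ``variance of the limit'' and ``limit of the variances'' are alternative readings rather than equivalent ones, but you apply each one consistently, so the argument is unaffected.
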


The proof is an immediate consequence of premultiplying the representation \eqref{eq:AL} by $a'$; see \Cref{app:thm1}.

\Cref{prop:cov_relevance} is intentionally simple. Its role is to separate two questions that are often conflated. A covariance perturbation may be large as a matrix, but irrelevant for a particular contrast. Conversely, a perturbation that is small in a global norm may have a large effect on a contrast if it is concentrated in the relevant direction. The remainder of the paper applies this principle to robust inference in linear regression, where the perturbation is induced by heteroskedasticity and dependence.

\section{Linear Regression and Directional Leverage}
\label{sec:linear_regression}

Consider the linear regression model
\begin{equation}
    y_i=x_i'\beta+u_i,
    \qquad i=1,\ldots,n,
\label{eq:model}
\end{equation}
where $x_i$ is a $k\times1$ vector of regressors. Let $X=(x_1,\ldots,x_n)'$ and assume $X'X$ is nonsingular. Throughout this section we work conditionally on $X$ and assume
\[
    \E(u_i\mid X)=0,
    \qquad
    \E(u_i u_j\mid X)=0\quad (i\neq j),
\]
while allowing
\[
    \Var(u_i\mid X)=\sigma_i^2.
\]
Let
\[
    \Sigma=\diag(\sigma_1^2,\ldots,\sigma_n^2),
    \qquad
    \bar\sigma^2=\frac1n\sum_{i=1}^n\sigma_i^2.
\]
The OLS estimator is $\hat\beta=(X'X)^{-1}X'y$.

For a fixed nonzero contrast $a\in\R^k$, define the directional leverage
\begin{equation}
    \ell_i(a)=\{a'(X'X)^{-1}x_i\}^2,
    \qquad i=1,\ldots,n.
\label{eq:ell}
\end{equation}
This is not the usual hat-matrix leverage $h_i=x_i'(X'X)^{-1}x_i$. Classical leverage has long been recognized as important for the finite-sample behavior of heteroskedasticity-consistent covariance estimators
\citep{ChesherAustin1991}. Directional leverage is conceptually different:
it measures the contribution of each observation to the sampling variability
of a particular contrast rather than its influence on fitted values. 

\begin{lemma}[Sum of directional leverages]
\label{lem:sum_ell}
For every $a\in\R^k$,
\[
    \sum_{i=1}^n \ell_i(a)=a'(X'X)^{-1}a.
\]
\end{lemma}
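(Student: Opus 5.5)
The plan is to write each directional leverage as the square of a scalar and then recognise the sum of squares as a quadratic form in $X'X$. Put $c=(X'X)^{-1}a\in\R^k$, which is well defined because $X'X$ is nonsingular. Since $(X'X)^{-1}$ is symmetric, the scalar $a'(X'X)^{-1}x_i$ equals its own transpose, so
\[
    a'(X'X)^{-1}x_i = x_i'(X'X)^{-1}a = x_i'c .
\]
Hence $\ell_i(a)=(x_i'c)^2 = c'x_ix_i'c$.

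Next I sum over $i$ and use the outer-product identity $\sum_{i=1}^n x_ix_i' = X'X$, which holds because the rows of $X$ are $x_1',\ldots,x_n'$. This gives
\[
    \sum_{i=1}^n \ell_i(a)
    = c'\Bigl(\sum_{i=1}^n x_ix_i'\Bigr)c
    = c'X'Xc .
\]
Substituting $c=(X'X)^{-1}a$ and using symmetry once more, $c'X'Xc = a'(X'X)^{-1}(X'X)(X'X)^{-1}a = a'(X'X)^{-1}a$, which is the claim.

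An equivalent matrix formulation is to set $v=X(X'X)^{-1}a\in\R^n$. Its $i$th entry is $x_i'(X'X)^{-1}a$, so $\sum_i\ell_i(a)=\|v\|^2=v'v=a'(X'X)^{-1}X'X(X'X)^{-1}a$. I would mention this form because it links the lemma to the variance of $a'\hat\beta$ under homoskedasticity: $\Var(a'\hat\beta\mid X)=\sigma^2 a'(X'X)^{-1}a$.

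There is no real obstacle. The only step requiring care is the symmetry of $(X'X)^{-1}$, which lets the scalar $a'(X'X)^{-1}x_i$ be rewritten as $x_i'c$ so that the squares combine into the Gram matrix $X'X$. The identity holds for every $a\in\R^k$, including $a=0$, where both sides vanish.
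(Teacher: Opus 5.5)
Your proposal is correct and follows essentially the same route as the paper. The paper likewise writes $\ell_i(a)=a'(X'X)^{-1}x_ix_i'(X'X)^{-1}a$ and sums using $\sum_i x_ix_i'=X'X$. Your substitution $c=(X'X)^{-1}a$ and the vector form $v=X(X'X)^{-1}a$ are just notational repackagings of that one-line computation.
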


The proof follows immediately from $\sum_i x_ix_i'=X'X$; see \Cref{app:lem1}.

The heteroskedastic fixed-design variance of $a'\hat\beta$ is
\begin{equation}
    V_H(a)=a'(X'X)^{-1}X'\Sigma X(X'X)^{-1}a.
\label{eq:VH}
\end{equation}
The pooled fixed-design variance is
\begin{equation}
    V_R(a)=\bar\sigma^2 a'(X'X)^{-1}a.
\label{eq:VR}
\end{equation}
The notation $R$ is mnemonic for residual-based or pooled-residual inference, because a residual bootstrap based on a single pooled residual distribution targets this variance.

\begin{theorem}[Directional heteroskedasticity identity]
\label{thm:directional_identity}
For every nonzero contrast $a\in\R^k$,
\begin{equation}
    V_H(a)-V_R(a)
    =
    \sum_{i=1}^n \ell_i(a)(\sigma_i^2-\bar\sigma^2).
\label{eq:directional_identity}
\end{equation}
Equivalently,
\begin{equation}
    V_H(a)-V_R(a)
    =
    n\,\Cov_n\{\ell_i(a),\sigma_i^2\},
\label{eq:cov_identity}
\end{equation}
where $\Cov_n$ denotes empirical covariance over $i=1,\ldots,n$.
\end{theorem}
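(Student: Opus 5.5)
The plan is to compute $V_H(a)$ directly as a weighted sum over observations and then use \Cref{lem:sum_ell} to rewrite $V_R(a)$ in the same form. First, introduce the vector $c=(X'X)^{-1}a\in\R^k$, so that $a'(X'X)^{-1}x_i=c'x_i$ and $\ell_i(a)=(c'x_i)^2$. Since $\Sigma$ is diagonal,
\[
X'\Sigma X=\sum_{i=1}^n\sigma_i^2 x_ix_i',
\]
and substituting into \eqref{eq:VH} gives
\[
V_H(a)=c'X'\Sigma Xc=\sum_{i=1}^n\sigma_i^2(c'x_i)^2=\sum_{i=1}^n\ell_i(a)\sigma_i^2 .
\]

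Next, by \Cref{lem:sum_ell}, $a'(X'X)^{-1}a=\sum_i\ell_i(a)$. Hence \eqref{eq:VR} becomes $V_R(a)=\sum_i\ell_i(a)\bar\sigma^2$. Subtracting the two expressions yields \eqref{eq:directional_identity} immediately.

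For \eqref{eq:cov_identity}, take $\Cov_n(s_i,t_i)=n^{-1}\sum_i(s_i-\bar s)(t_i-\bar t)$, with the divisor $n$. By definition $\sum_i(\sigma_i^2-\bar\sigma^2)=0$, so subtracting $\bar\ell(a)=n^{-1}\sum_j\ell_j(a)$ from $\ell_i(a)$ does not change the sum:
\[
\sum_i\ell_i(a)(\sigma_i^2-\bar\sigma^2)=\sum_i\{\ell_i(a)-\bar\ell(a)\}(\sigma_i^2-\bar\sigma^2)=n\,\Cov_n\{\ell_i(a),\sigma_i^2\}.
\]

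There is no real obstacle here; the argument is pure algebra. The only point needing care is the normalization of $\Cov_n$. The identity requires the divisor $n$, not $n-1$, and I will state this convention explicitly. The nonzero restriction on $a$ is not used; the identity holds trivially at $a=0$.
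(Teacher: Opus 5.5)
Your proposal is correct and matches the paper's proof step for step: write $V_H(a)=\sum_i\ell_i(a)\sigma_i^2$ using the diagonality of $\Sigma$, rewrite $V_R(a)=\bar\sigma^2\sum_i\ell_i(a)$ via \Cref{lem:sum_ell}, subtract, and then center $\ell_i(a)$ using $\sum_i(\sigma_i^2-\bar\sigma^2)=0$. Your explicit remark that $\Cov_n$ must use divisor $n$ (not $n-1$) for \eqref{eq:cov_identity} to hold is a useful clarification that the paper leaves implicit.
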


\begin{proof}
Since $\Sigma$ is diagonal,
\[
    V_H(a)=\sum_{i=1}^n\{a'(X'X)^{-1}x_i\}^2\sigma_i^2
    =\sum_{i=1}^n\ell_i(a)\sigma_i^2.
\]
By \Cref{lem:sum_ell},
\[
    V_R(a)=\bar\sigma^2\sum_{i=1}^n\ell_i(a).
\]
Subtracting gives \eqref{eq:directional_identity}. Let $\bar\ell(a)=n^{-1}\sum_i\ell_i(a)$. Because $\sum_i(\sigma_i^2-\bar\sigma^2)=0$,
\[
    \sum_{i=1}^n\ell_i(a)(\sigma_i^2-\bar\sigma^2)
    =
    \sum_{i=1}^n\{\ell_i(a)-\bar\ell(a)\}(\sigma_i^2-\bar\sigma^2)
    =n\Cov_n\{\ell_i(a),\sigma_i^2\}.
\]
\end{proof}

\Cref{thm:directional_identity} is the central result of the paper. It implies that the relevance of heteroskedasticity is contrast-specific. The same heteroskedasticity pattern can increase the variance of one contrast $a'\hat\beta$, decrease it for another, and have almost no effect on a third. Omnibus heteroskedasticity tests do not capture this contrast-specific relevance.

\begin{corollary}[Contrast-specific irrelevance]
\label{cor:irrelevance}
Under the conditions of \Cref{thm:directional_identity}, heteroskedasticity is first-order irrelevant for inference on $a'\beta$ relative to the pooled benchmark if and only if
\[
    \Cov_n\{\ell_i(a),\sigma_i^2\}=0.
\]
\end{corollary}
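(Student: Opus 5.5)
The plan is to read the corollary directly off \Cref{thm:directional_identity}, after first making precise what ``first-order irrelevant relative to the pooled benchmark'' means. In the spirit of \Cref{prop:cov_relevance}, I would take the baseline to be $\Omega_0$ with contrast variance $V_R(a)=\bar\sigma^2 a'(X'X)^{-1}a$. The perturbation is induced by replacing $\bar\sigma^2 I_n$ with $\Sigma$, so the quantity that matters for the contrast is the scalar $a'\Delta a = V_H(a)-V_R(a)$. Here $\Delta=(X'X)^{-1}X'(\Sigma-\bar\sigma^2 I_n)X(X'X)^{-1}$ is the fixed-design analogue of the perturbation in \Cref{prop:cov_relevance}.

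Next I would verify that $a'\Delta a$ coincides with $V_H(a)-V_R(a)$. Expanding gives $a'\Delta a=\sum_i \ell_i(a)(\sigma_i^2-\bar\sigma^2)$, because $X'(\Sigma-\bar\sigma^2 I_n)X=\sum_i(\sigma_i^2-\bar\sigma^2)x_ix_i'$. This is exactly \eqref{eq:directional_identity}. Then, by the same \Cref{prop:cov_relevance} logic, irrelevance holds if and only if this scalar vanishes, that is, if and only if $V_H(a)=V_R(a)$.

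The last step is to invoke \eqref{eq:cov_identity}, which gives $V_H(a)-V_R(a)=n\,\Cov_n\{\ell_i(a),\sigma_i^2\}$. Since $n\ge 1$, the difference is zero exactly when the empirical covariance is zero, and both directions of the equivalence follow at once.

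I expect no step to be a real obstacle. The only care needed is definitional: the ``first-order'' notion must be the one in \Cref{prop:cov_relevance}, namely that the variance of the contrast under the perturbed structure equals its baseline value. Once that is fixed, the equivalence is a one-line consequence of \Cref{thm:directional_identity}, with no asymptotic argument or regularity conditions beyond nonsingularity of $X'X$.
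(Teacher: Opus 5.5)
Your proposal is correct and matches the paper's proof, which simply says the result is immediate from \eqref{eq:cov_identity}: $V_H(a)-V_R(a)=n\,\Cov_n\{\ell_i(a),\sigma_i^2\}$ with $n>0$. Your extra step, identifying $V_H(a)-V_R(a)$ with $a'\Delta a$ as in the paper's matrix-form appendix and appealing to \Cref{prop:cov_relevance}, only makes explicit the meaning of ``first-order irrelevant,'' which the paper leaves implicit.
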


\begin{proof}
Immediate from \eqref{eq:cov_identity}.
\end{proof}

\section{Heteroskedasticity Relevance Ratios and Information Weights}
\label{sec:ratio}

The difference $V_H(a)-V_R(a)$ has the units of a variance and is typically of order $n^{-1}$. We therefore define the contrast-specific heteroskedasticity relevance ratio
\begin{equation}
    R(a)=\frac{V_H(a)}{V_R(a)}.
\label{eq:R}
\end{equation}
The corresponding standard-error inflation factor is
\begin{equation}
    S(a)=\sqrt{R(a)}.
\label{eq:S}
\end{equation}
Thus $S(a)=1.10$ means that the robust standard error for $a'\hat\beta$ is ten percent larger than the pooled standard error, whereas $S(a)=0.90$ means that it is ten percent smaller.

Define information weights
\begin{equation}
    w_i(a)=\frac{\ell_i(a)}{\sum_{j=1}^n\ell_j(a)}.
\label{eq:w}
\end{equation}
Then $w_i(a)\ge0$ and $\sum_iw_i(a)=1$.

\begin{proposition}[Information-weighting representation]
\label{prop:weighting}
For every nonzero contrast $a$,
\begin{equation}
    R(a)=
    \frac{\sum_{i=1}^n w_i(a)\sigma_i^2}{\bar\sigma^2}.
\label{eq:weighting}
\end{equation}
Equivalently,
\begin{equation}
    R(a)-1=
    \frac{\sum_{i=1}^n w_i(a)(\sigma_i^2-\bar\sigma^2)}{\bar\sigma^2}.
\label{eq:R_minus_one}
\end{equation}
\end{proposition}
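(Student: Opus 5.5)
The plan is to reuse the two expressions already obtained in the proof of \Cref{thm:directional_identity}, namely
\[
V_H(a)=\sum_{i=1}^n\ell_i(a)\sigma_i^2
\quad\text{and}\quad
V_R(a)=\bar\sigma^2\sum_{i=1}^n\ell_i(a),
\]
where the second uses \Cref{lem:sum_ell}. Dividing the first by the second gives
\[
R(a)=\frac{\sum_i\ell_i(a)\sigma_i^2}{\bar\sigma^2\sum_j\ell_j(a)}.
\]
Writing $\ell_i(a)/\sum_j\ell_j(a)=w_i(a)$ from \eqref{eq:w} then yields \eqref{eq:weighting} directly.

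For \eqref{eq:R_minus_one}, I will subtract $1=\bar\sigma^2/\bar\sigma^2$ and use $\sum_iw_i(a)=1$ to write $\bar\sigma^2=\sum_iw_i(a)\bar\sigma^2$. Combining the numerators term by term gives $\sum_iw_i(a)(\sigma_i^2-\bar\sigma^2)/\bar\sigma^2$. Alternatively, one can divide \eqref{eq:directional_identity} by $V_R(a)$ directly, which gives the same expression.

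The only point needing care is that the divisions are legitimate. First, $\sum_j\ell_j(a)=a'(X'X)^{-1}a>0$ for nonzero $a$, since $X'X$ is nonsingular and positive semidefinite, hence positive definite. Second, $\bar\sigma^2>0$: I will note that this is implicitly required for $R(a)$ in \eqref{eq:R} to be defined. It holds unless all $\sigma_i^2=0$, in which case both variances vanish and the ratio is undefined. These checks are the only potential obstacle; the remaining algebra is routine.
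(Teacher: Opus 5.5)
Your proposal is correct and matches the paper's proof: divide $V_H(a)=\sum_i\ell_i(a)\sigma_i^2$ by $V_R(a)=\bar\sigma^2\sum_i\ell_i(a)$, substitute $w_i(a)$, and subtract one using $\sum_i w_i(a)=1$. Your check that the denominators are positive ($a'(X'X)^{-1}a>0$ for $a\neq 0$, and $\bar\sigma^2>0$) is a small addition that the paper leaves implicit.
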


\begin{proof}
Divide $V_H(a)=\sum_i\ell_i(a)\sigma_i^2$ by $V_R(a)=\bar\sigma^2\sum_i\ell_i(a)$
and substitute $w_i(a)=\ell_i(a)/\sum_j\ell_j(a)$; see \Cref{app:prop1}.
\end{proof}

The representation \eqref{eq:weighting} is the most interpretable form of the main result.
$R(a)$ is the ratio of two weighted averages of the same quantities $\{\sigma_i^2\}$:
the numerator uses \emph{contrast-specific} weights $w_i(a)$ that reflect how informative
each observation is for the contrast $a'\hat\beta$, while the denominator uses \emph{uniform}
weights $1/n$. Three cases emerge directly.

\begin{enumerate}
    \item If high-variance observations are also highly informative for $a'\hat\beta$ — that
    is, the leverage profile $w_i(a)$ loads disproportionately on large-$\sigma_i^2$
    observations — then the numerator exceeds the denominator and $R(a)>1$. Robust
    inference is more conservative than classical inference for this contrast.

    \item If high-variance observations have low directional leverage — the leverage
    profile loads on low-$\sigma_i^2$ observations — then $R(a)<1$. Pooled or
    residual-bootstrap inference is actually \emph{more} conservative for this contrast.
    This is the non-obvious case: heteroskedasticity is present, but concentrated where
    it contributes little information for the contrast, so robust standard errors are
    \emph{smaller} than their classical counterparts.

    \item If the directional leverage is uncorrelated with the variance profile in the
    sense $\Cov_n(\ell_i(a),\sigma_i^2)=0$, then $R(a)=1$ exactly and heteroskedasticity
    is first-order irrelevant for this contrast.
\end{enumerate}

The practical message is immediate: two researchers estimating different contrasts from
the same regression, facing the same heteroskedastic errors, may reach opposite conclusions
about whether robust inference matters — because their contrasts have different directional
leverage profiles. An omnibus heteroskedasticity test is equally silent about both.

\begin{corollary}[Calibration bounds]
\label{cor:bounds}
For every nonzero $a$,
\[
    \min_{1\le i\le n}\frac{\sigma_i^2}{\bar\sigma^2}
    \le
    R(a)
    \le
    \max_{1\le i\le n}\frac{\sigma_i^2}{\bar\sigma^2}.
\]
\end{corollary}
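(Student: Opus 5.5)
The plan is to read the bounds directly off the information-weighting representation in \Cref{prop:weighting}. By \eqref{eq:weighting}, $R(a)=\sum_i w_i(a)\,(\sigma_i^2/\bar\sigma^2)$. This is a convex combination of the numbers $\sigma_i^2/\bar\sigma^2$, because $w_i(a)\ge 0$ and $\sum_i w_i(a)=1$. A convex combination of finitely many reals lies between their minimum and maximum, and that gives both inequalities at once.

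The steps, in order, are as follows.

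First, I check that the weights are well defined, i.e.\ $\sum_j \ell_j(a)>0$ for nonzero $a$. By \Cref{lem:sum_ell}, $\sum_j\ell_j(a)=a'(X'X)^{-1}a$. Since $X'X$ is nonsingular and positive semidefinite, it is positive definite, and so is its inverse. Hence this sum is strictly positive for $a\neq 0$. Each $\ell_i(a)$ is a square, so it is nonnegative. Together these give $w_i(a)\ge0$ and $\sum_i w_i(a)=1$.

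Second, I need $\bar\sigma^2>0$ so that $R(a)$ and the ratios $\sigma_i^2/\bar\sigma^2$ make sense. I would note this as the implicit maintained condition under which $V_R(a)>0$ and $R(a)$ is defined at all.

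Third, let $m=\min_i\sigma_i^2/\bar\sigma^2$ and $M=\max_i\sigma_i^2/\bar\sigma^2$. Then
\[
m=\sum_i w_i(a)\,m\le \sum_i w_i(a)\,\sigma_i^2/\bar\sigma^2\le \sum_i w_i(a)\,M=M,
\]
which uses only the nonnegativity of the weights and the fact that they sum to one.

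None of these steps is a real obstacle. The only point that needs care is the first: positivity of $a'(X'X)^{-1}a$, together with the implicit requirement $\bar\sigma^2>0$, which guarantees that \eqref{eq:weighting} applies. I would add a remark that the bounds are attained when the weights concentrate on the extreme-variance observations. For example, the upper bound is attained if $\ell_i(a)=0$ for every $i$ outside the set of maximal-variance observations. This shows the bounds cannot be improved without further information about $X$.
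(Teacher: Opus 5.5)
Your proposal is correct and matches the paper's proof: the paper likewise observes that, by \Cref{prop:weighting}, $R(a)$ is a convex combination of the ratios $\sigma_i^2/\bar\sigma^2$, and so lies between their minimum and maximum. Your extra checks that $a'(X'X)^{-1}a>0$ for $a\neq 0$ and that $\bar\sigma^2>0$ spell out well-definedness conditions the paper leaves implicit.
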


\begin{proof}
$R(a)$ is a convex combination of the ratios $\sigma_i^2/\bar\sigma^2$ by \Cref{prop:weighting}.
\end{proof}

\begin{remark}[Spectrum over contrasts]
\label{rem:spectrum}
For fixed $X$ and $\Sigma$, the extrema of $R(a)$ over $a\neq0$ are the generalized eigenvalues of the pencil $(V_H,V_R)$. This follows because $R(a)=a'V_Ha/a'V_Ra$. We use this only as a calibration device; the main object in applications is the contrast-specific ratio for the particular contrast selected by the researcher.
\end{remark}

\section{Sample Analogues}
\label{sec:sample}

Let $\hat u_i=y_i-x_i'\hat\beta$ and
\[
    \hat\sigma^2=\frac1n\sum_{i=1}^n\hat u_i^2.
\]
Define
\begin{align}
    \widehat V_R(a)
    &=
    \hat\sigma^2 a'(X'X)^{-1}a, \\
    \widehat V_H(a)
    &=
    a'(X'X)^{-1}\left(\sum_{i=1}^n x_ix_i'\hat u_i^2\right)(X'X)^{-1}a
    =
    \sum_{i=1}^n\ell_i(a)\hat u_i^2.
\end{align}
The sample heteroskedasticity relevance ratio is
\begin{equation}
    \widehat R(a)=\frac{\widehat V_H(a)}{\widehat V_R(a)}.
\label{eq:Rhat}
\end{equation}
Equivalently,
\begin{equation}
    \widehat R(a)-1
    =
    \frac{n\,\widehat\Cov_n\{\ell_i(a),\hat u_i^2\}}
         {\hat\sigma^2\sum_{i=1}^n\ell_i(a)}.
\label{eq:Rhat_cov}
\end{equation}
This statistic is scale-free. It estimates the relative variance difference between the heteroskedastic and pooled benchmarks for the contrast $a'\beta$.

\section{Bootstrap Interpretation}
\label{sec:bootstrap}
The directional identity immediately explains why the two classical bootstrap procedures target different variance objects.
Contrary to common intuition, neither bootstrap is uniformly more conservative.
The residual bootstrap fixes \(X\), resamples centered residuals from a single empirical
distribution, and constructs
\[
        y_i^*=x_i'\widehat\beta+\widehat u_i^* .
\]
When the regression contains an intercept, the residuals are already centered. Otherwise,
\(\widehat u_i^*\) should be drawn from the centered empirical residual distribution.
Conditional on the data, the bootstrap errors have common empirical variance
\(\widehat\sigma^2\). Hence
\[
        \operatorname{Var}^*(\widehat\beta_R^*\mid X)
        =
        \widehat\sigma^2(X'X)^{-1},
\]
and therefore
\[
        \operatorname{Var}^*(a'\widehat\beta_R^*\mid X)
        =
        \widehat V_R(a).
\]
This is an exact conditional finite-sample identity for the residual-bootstrap mechanism.

The pairs bootstrap resamples rows $(x_i,y_i)$ jointly. Under the standard first-order linearization of OLS, it targets asymptotically the Eicker-Huber-White covariance matrix \citep{Eicker1967,Huber1967,White1980}. Hence
\[
    \Var^*(a'\hat\beta_P^*\mid X)=\widehat V_H(a)+o_p(n^{-1})
\]
under standard regularity conditions. Therefore,
\[
    \frac{\Var^*(a'\hat\beta_P^*\mid X)}
         {\Var^*(a'\hat\beta_R^*\mid X)}
    =
    \widehat R(a)+o_p(1).
\]
The pairs bootstrap is more conservative for $a'\beta$ when $\widehat R(a)>1$, less conservative when $\widehat R(a)<1$, and approximately equivalent when $\widehat R(a)\approx1$. This comparison is contrast-specific.

\begin{remark}[On the validity of always using the White estimator]
\label{rem:validity}
A natural question is whether the White estimator $\widehat V_H(a)$ is valid regardless of
$R(a)$, making $R(a)$ irrelevant for actual inference. The answer has three parts.

First, $\widehat V_H(a)$ is indeed consistent for $V_H(a)$ regardless of $R(a)$, so
asymptotically valid tests can be constructed from HC estimators in all three regimes.
In this sense, the White estimator is \emph{always valid}.

Second, \emph{always valid} and \emph{always conservative} are different properties.
Applied practice treats robust SEs as conservative relative to classical SEs. This is
correct when $R(a)>1$. When \(R(a)<1\), robust SEs are \emph{smaller} than classical standard
errors. Thus the classical or residual-bootstrap procedure is the more
conservative one for this contrast. This should not be interpreted as evidence
that the classical procedure is preferable. Rather, the larger classical
standard error reflects a mismatch between the pooled variance benchmark and
the actual heteroskedastic variance governing the estimand. Robust inference
remains the asymptotically valid procedure targeting \(V_H(a)\), while the
classical procedure is conservative because it ignores the alignment
between the information supporting the estimand and the error-variance
structure.
A researcher who applies HC0 while assuming it provides a conservative lower bound on the
t-statistic has an incorrect mental model in this regime: the t-statistic is larger under
HC0 than under classical inference for this contrast, not smaller. \Cref{tab:size_efficiency}
shows this is consequential, not merely directional: under the anti-aligned design of
\Cref{sec:dgp_neg} ($R(a)\approx0.085$), the classical test has empirical size $0.000$ at the
nominal $5\%$ level — the standard error is so inflated for this contrast that the test
essentially never rejects, even when it should — while HC0 achieves size $0.039$, close to
nominal. The researcher who uses classical SEs in the belief that they are ``safe'' is in
fact relying on a test with severely distorted size in the conservative direction relative to nominal size.

Third, $R(a)$ also matters for \emph{efficiency}. \Cref{tab:size_efficiency} shows that
when $R(a)\approx1$ (the orthogonal design of \Cref{sec:dgp1}), the standard deviation of
the HC0 standard-error estimate across replications is statistically indistinguishable
from that of the classical estimate (ratio $0.998$): there is no efficiency cost to using
HC0 for this contrast. The power loss documented by \citet{KingRoberts2015} instead arises
in cases where $R(a)\gg1$, where HC0 estimates a genuinely larger quantity and the increased
sampling variability of the HC0 standard error reduces power relative to a (now invalid)
classical benchmark. The ratio $R(a)$ identifies which of these three regimes a given
contrast falls into, allowing applied researchers to make an informed choice rather than
applying robust standard errors uniformly across every contrast in a regression.
\end{remark}

\section{General Covariance Structures and Dependence}
\label{sec:dependence}

We now allow for general conditional covariance
\[
    \Omega=\E(uu'\mid X).
\]
The fixed-design variance of $a'\hat\beta$ is
\begin{equation}
    V_\Omega(a)=a'(X'X)^{-1}X'\Omega X(X'X)^{-1}a.
\label{eq:Vomega}
\end{equation}
Define
\[
    q_i(a)=a'(X'X)^{-1}x_i.
\]
Then $\ell_i(a)=q_i(a)^2$.

\begin{proposition}[Contrast-specific covariance decomposition]
\label{prop:omega_decomp}
For any symmetric covariance matrix $\Omega$,
\begin{equation}
    V_\Omega(a)
    =
    \sum_{i=1}^n q_i(a)^2\Omega_{ii}
    +
    \sum_{i\neq j}q_i(a)q_j(a)\Omega_{ij}.
\label{eq:omega_decomp}
\end{equation}
Consequently,
\begin{equation}
    V_\Omega(a)-V_R(a)
    =
    \sum_{i=1}^n\ell_i(a)(\Omega_{ii}-\bar\sigma^2)
    +
    \sum_{i\neq j}q_i(a)q_j(a)\Omega_{ij},
\label{eq:omega_minus_pooled}
\end{equation}
where $\bar\sigma^2=n^{-1}\sum_i\Omega_{ii}$.
\end{proposition}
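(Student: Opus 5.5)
The plan is to expand the quadratic form in \eqref{eq:Vomega} coordinatewise, exactly as in the proof of \Cref{thm:directional_identity}, but now keeping the off-diagonal entries of $\Omega$. First, write $X'\Omega X=\sum_{i=1}^n\sum_{j=1}^n \Omega_{ij}x_ix_j'$, which holds for any $n\times n$ matrix $\Omega$ because the $i$th row of $X$ is $x_i'$. Pre- and post-multiplying by $a'(X'X)^{-1}$ and $(X'X)^{-1}a$ gives
\[
V_\Omega(a)=\sum_{i=1}^n\sum_{j=1}^n \{a'(X'X)^{-1}x_i\}\,\Omega_{ij}\,\{x_j'(X'X)^{-1}a\}.
\]
Symmetry of $X'X$ gives $x_j'(X'X)^{-1}a=a'(X'X)^{-1}x_j=q_j(a)$, so the summand is $q_i(a)q_j(a)\Omega_{ij}$.

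Next, split the double sum into its diagonal part ($i=j$), which equals $\sum_i q_i(a)^2\Omega_{ii}=\sum_i\ell_i(a)\Omega_{ii}$, and its off-diagonal part $\sum_{i\neq j}q_i(a)q_j(a)\Omega_{ij}$. This yields \eqref{eq:omega_decomp}. Symmetry of $\Omega$ is not needed for this step, but it lets the off-diagonal sum be rewritten as $2\sum_{i<j}q_i(a)q_j(a)\Omega_{ij}$ if desired.

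For \eqref{eq:omega_minus_pooled}, apply \Cref{lem:sum_ell} to write $V_R(a)=\bar\sigma^2 a'(X'X)^{-1}a=\bar\sigma^2\sum_i\ell_i(a)$, with $\bar\sigma^2$ now defined as the average of the diagonal entries $\Omega_{ii}$. Subtracting this from the diagonal part gives $\sum_i\ell_i(a)(\Omega_{ii}-\bar\sigma^2)$, while the off-diagonal term is carried through unchanged. As a consistency check, when $\Omega$ is diagonal the result reduces to \Cref{thm:directional_identity}.

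There is no genuine obstacle. The only points that need care are the use of symmetry of $(X'X)^{-1}$ to identify both factors as $q_i(a)$ and $q_j(a)$, and the fact that $\bar\sigma^2$ is redefined here in terms of $\Omega_{ii}$, so that \Cref{lem:sum_ell} applies verbatim.
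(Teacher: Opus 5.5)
Your proposal is correct and follows the paper's own argument essentially verbatim: expand $V_\Omega(a)=\sum_{i,j}q_i(a)q_j(a)\Omega_{ij}$, separate the diagonal and off-diagonal terms, and subtract $V_R(a)=\bar\sigma^2\sum_i\ell_i(a)$ using \Cref{lem:sum_ell}. Your remarks on the symmetry of $(X'X)^{-1}$ and on $\bar\sigma^2$ being the average of the $\Omega_{ii}$ are accurate, and they make explicit what the paper leaves implicit.
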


\begin{proof}
Expand $V_\Omega(a)=\sum_{i,j}q_i(a)q_j(a)\Omega_{ij}$ by separating diagonal and off-diagonal terms, then subtract $V_R(a)=\bar\sigma^2\sum_i\ell_i(a)$; see \Cref{app:prop2}.
\end{proof}

The first term in \eqref{eq:omega_minus_pooled} is the directional heteroskedasticity component. The second term is a covariance-alignment component. Thus dependence matters for $a'\beta$ only to the extent that covariance entries $\Omega_{ij}$ are aligned with the contrast-specific weights $q_i(a)q_j(a)$.

\subsection{Clustered Dependence}
\label{subsec:cluster}

Suppose observations are partitioned into clusters $g=1,\ldots,G$, with cluster index sets $\mathcal G_g$. A one-way cluster covariance matrix satisfies
\[
    \Omega_{ij}=0
    \qquad \text{if } i \text{ and } j \text{ belong to different clusters}.
\]
Then
\begin{equation}
    V_{C}(a)=
    \sum_{g=1}^G
    \sum_{i\in\mathcal G_g}\sum_{j\in\mathcal G_g}
    q_i(a)q_j(a)\Omega_{ij}.
\label{eq:cluster_variance}
\end{equation}
Subtracting the pooled benchmark yields
\begin{equation}
    V_C(a)-V_R(a)
    =
    \sum_i\ell_i(a)(\Omega_{ii}-\bar\sigma^2)
    +
    \sum_{g=1}^G\sum_{\substack{i,j\in\mathcal G_g\\ i\neq j}}
    q_i(a)q_j(a)\Omega_{ij}.
\label{eq:cluster_decomp}
\end{equation}
The second term has no analogue under independent errors. It measures the alignment between within-cluster covariance and the contrast-specific influence weights. This term explains why a cluster-robust standard error \citep{Arellano1987,Moulton1986,CameronMiller2015} may be consequential for one contrast but negligible for another, even under the same clustering structure.

\section{Consistency of the Sample Ratio}
\label{sec:asymptotics}

This section first establishes consistency of $\widehat R(a)$ under fixed-design asymptotics, then gives a scalar central limit theorem for $\widehat R(a)$. The additional assumptions below are standard in fixed-design regression asymptotics, but we state their role explicitly because the object of interest is a ratio of two variances, each of order $n^{-1}$.

\begin{assumption}[Fixed design and contrast normalization]
\label{ass:design}
The design matrix $X=X_n$ is fixed, $X'X$ is nonsingular, and
\[
    0<c\le \lambda_{\min}(n^{-1}X'X)\le \lambda_{\max}(n^{-1}X'X)\le C<\infty.
\]
For the contrast $a=a_n$,
\[
    0<c\le n\,a'(X'X)^{-1}a\le C<\infty .
\]
\end{assumption}

The eigenvalue condition is the usual fixed-design full-rank condition ensuring the ordinary least squares rate $\|\widehat\beta-\beta\|=O_p(n^{-1/2})$. The contrast normalization rules out degenerate contrasts whose pooled variance is asymptotically zero or exploding; it is automatically satisfied for fixed nonzero contrasts when $n^{-1}X'X$ has a nonsingular limit.

\begin{assumption}[Errors]
\label{ass:errors}
Conditional on $X$, $u_1,\ldots,u_n$ are independent and satisfy
\[
    \E(u_i\mid X)=0,
    \qquad
    \E(u_i^2\mid X)=\sigma_i^2,
    \qquad
    \sup_i\E(u_i^4\mid X)<C.
\]
Moreover, $0<c\le \bar\sigma^2\le C<\infty$.
\end{assumption}

\begin{assumption}[No dominant directional leverage]
\label{ass:leverage}
The information weights satisfy
\[
    \max_{1\le i\le n} w_i(a)\to 0.
\]
\end{assumption}

This assumption is the contrast-specific analogue of the no-dominant-observation condition used in classical asymptotic theory for OLS \citep[e.g.,][Chapter 4]{White2001}. It requires that the informational weight of any single observation for the contrast $a'\hat\beta$ is asymptotically negligible. Under a balanced design with $n$ observations of equal leverage, $\max_iw_i(a)=n^{-1}\to0$ automatically. In unbalanced designs, the condition fails only if a single observation eventually carries a fixed fraction of the total information for the contrast — a situation that would make inference on $a'\hat\beta$ unreliable regardless of whether heteroskedasticity is present.

\begin{assumption}[Regressor magnitude]
\label{ass:regressor_magnitude}
\[
    \sup_{1\le i\le n}\|x_i\|^2\cdot\max_{1\le i\le n}w_i(a)\to0.
\]
\end{assumption}

This condition controls the cost of replacing true errors by OLS residuals in the weighted sum $\sum_i\ell_i(a)\hat u_i^2$. In the typical case $\max_iw_i(a)=O(n^{-1})$, it reduces to $\sup_i\|x_i\|^2=o(n)$, which holds whenever the regressors are uniformly bounded or grow at most polynomially. 
\Cref{ass:regressor_magnitude} is not implied by the eigenvalue condition in \Cref{ass:design}.
It is a separate no-extreme-regressor condition controlling the effect of replacing
unobserved errors by OLS residuals in weighted quadratic forms. A simple sufficient
primitive condition is
\[
        \sup_i \|x_i\| < C
        \quad\text{and}\quad
        \max_i w_i(a)\to0 .
\]
More generally, \Cref{ass:regressor_magnitude} allows slowly growing regressors provided their growth is
not concentrated on observations with high directional leverage.

\begin{proposition}[Consistency]
\label{prop:ratio_consistency}
Under \Cref{ass:design,ass:errors,ass:leverage,ass:regressor_magnitude},
\[
    \widehat R(a)-R(a)\p 0.
\]
Consequently, $\widehat S(a)-S(a)\p0$.
\end{proposition}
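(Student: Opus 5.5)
Writing $L=\sum_j\ell_j(a)=a'(X'X)^{-1}a$ (\Cref{lem:sum_ell}), we have $R(a)=\sum_i w_i(a)\sigma_i^2/\bar\sigma^2$ by \Cref{prop:weighting}, and similarly
\[
\widehat R(a)=\frac{\sum_i w_i(a)\hat u_i^2}{\hat\sigma^2}.
\]
The plan is to show separately that the numerator and the denominator are consistent, in the sense that $\sum_i w_i(a)\hat u_i^2-\sum_i w_i(a)\sigma_i^2\p0$ and $\hat\sigma^2-\bar\sigma^2\p0$. By \Cref{ass:errors}, $\bar\sigma^2\ge c>0$. Moreover, $\sum_iw_i\sigma_i^2\le \max_i\sigma_i^2\le C^{1/2}$ is bounded, since $\sigma_i^2\le(\E u_i^4)^{1/2}$. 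The continuous mapping theorem then gives $\widehat R(a)-R(a)\p0$. Since $R(a)\ge \min_i\sigma_i^2/\bar\sigma^2\ge0$ (\Cref{cor:bounds}), the same argument, using uniform continuity of $\sqrt{\cdot}$ on $[0,\infty)$, gives $\widehat S(a)-S(a)\p0$.

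\textbf{Step 1 (true errors).} Write $\sum_i w_i u_i^2-\sum_i w_i\sigma_i^2=\sum_i w_i(u_i^2-\sigma_i^2)$. By independence this has mean zero and variance at most
\[
\sum_i w_i^2\E u_i^4\le C\max_iw_i\sum_iw_i=C\max_iw_i\to0
\]
by \Cref{ass:leverage}, so Chebyshev gives convergence to zero. The unweighted average $n^{-1}\sum_i(u_i^2-\sigma_i^2)$ likewise has variance $O(n^{-1})$.

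\textbf{Step 2 (residual replacement), the main obstacle.} Write $\hat u_i=u_i-x_i'(\hat\beta-\beta)$, so that
\[
\hat u_i^2-u_i^2=-2u_ix_i'(\hat\beta-\beta)+\{x_i'(\hat\beta-\beta)\}^2 .
\]
By \Cref{ass:design} and \Cref{ass:errors}, $\E\|\hat\beta-\beta\|^2=\operatorname{tr}\{(X'X)^{-1}X'\Sigma X(X'X)^{-1}\}=O(n^{-1})$, so $\|\hat\beta-\beta\|^2=O_p(n^{-1})$.

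For the weighted sum, the quadratic term is bounded by
\[
\sum_iw_i\|x_i\|^2\|\hat\beta-\beta\|^2\le \sup_i\|x_i\|^2\,O_p(n^{-1}).
\]
This is where \Cref{ass:regressor_magnitude} enters. The normalization $nL\asymp1$ together with $\sum_i\ell_i=L$ does not by itself force $\max_iw_i\gtrsim n^{-1}$, but $\max_iw_i\ge n^{-1}$ holds trivially since the weights sum to one. Hence $\sup_i\|x_i\|^2\max_iw_i\to0$ implies $\sup_i\|x_i\|^2/n\to0$, and the quadratic term is $o_p(1)$.

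The cross term is at most
\[
2\|\hat\beta-\beta\|\Bigl(\sum_iw_iu_i^2\Bigr)^{1/2}\Bigl(\sum_iw_i\|x_i\|^2\Bigr)^{1/2}
\]
by Cauchy--Schwarz. Here $\sum_iw_iu_i^2=O_p(1)$ by Step 1, and the remaining factor is $O_p(n^{-1/2})\cdot o(n^{1/2})=o_p(1)$.

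The unweighted version, with $w_i$ replaced by $1/n$, is handled identically. It is even easier, since $n^{-1}\sum_i\|x_i\|^2=\operatorname{tr}(n^{-1}X'X)\le kC$, giving $O_p(n^{-1})$ and $O_p(n^{-1/2})$ bounds. Combining Steps 1 and 2 gives both consistency claims, and hence the result.
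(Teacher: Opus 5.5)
Your proof is correct and follows essentially the same route as the paper's: an oracle step controlled by Chebyshev and $\max_i w_i(a)\to0$, a residual-replacement step split into a Cauchy--Schwarz cross term and a quadratic term using $\sup_i\|x_i\|^2=o(n)$ (obtained, as in the paper, from $\max_i w_i(a)\ge n^{-1}$), and then continuity for the ratio and its square root. The differences are minor refinements in your favor. Normalizing by $L=a'(X'X)^{-1}a$ makes the targets $O(1)$ with the denominator bounded below, so the ratio and square-root steps are cleaner than the paper's appeal to continuity at $n$-dependent points of order $n^{-1}$. Your symmetric Cauchy--Schwarz split of the cross term also needs only $\sup_i\|x_i\|^2=o(n)$, rather than the full product condition in \Cref{ass:regressor_magnitude}.
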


\begin{proof}
See \Cref{app:prop_consistency}.
\end{proof}

For the central limit theorem, consistency-scale remainders are not enough. Because $V_R(a)=O(n^{-1})$, any remainder in $\widehat V_H(a)$ or $\widehat V_R(a)$ must be $o_p(n^{-3/2})$ — not merely $o_p(n^{-1})$ — to be negligible after forming $\sqrt{n}\{\widehat R(a)-R(a)\}$. We therefore state this additional precision requirement as an explicit assumption.

\begin{assumption}[Residual-estimation remainder for the ratio CLT]
\label{ass:residual_remainder_clt}
Let
\[
        \widetilde V_H(a)=\sum_{i=1}^n \ell_i(a)u_i^2,
        \qquad
        A_n(a)=a'(X'X)^{-1}a .
\]
Then
\[
        \widehat V_H(a)-\widetilde V_H(a)=o_p(n^{-3/2}),
\]
and
\[
        \widehat V_R(a)-V_R(a)
        -
        A_n(a)\frac1n\sum_{i=1}^n (u_i^2-\sigma_i^2)
        =
        o_p(n^{-3/2}).
\]
\end{assumption}

This assumption is stronger than what is needed for consistency. It is needed because
\(V_R(a)=O(n^{-1})\), so a variance-estimation remainder must be \(o_p(n^{-3/2})\)
to be negligible after forming \(\sqrt n\{\widehat R(a)-R(a)\}\). Primitive sufficient
conditions include bounded regressors and a balanced directional-leverage condition such as
\[
        \max_i \ell_i(a)=O(n^{-2}),
\]
which is stronger than the no-dominant-leverage condition \(\max_i w_i(a)\to0\).

\begin{assumption}[Triangular-array Lindeberg condition]
\label{ass:scalar_lindeberg}
Let
\[
        z_i=u_i^2-\sigma_i^2,
        \qquad
        b_i(a)=\ell_i(a)-R(a)\frac{A_n(a)}{n},
\]
and define
\[
        s_n^2(a)
        =
        \sum_{i=1}^n b_i(a)^2
        \operatorname{Var}(u_i^2\mid X),
        \qquad
        \Psi_n(a)
        =
        \frac{n s_n^2(a)}{V_R(a)^2}.
\]
Assume
\[
        \Psi_n(a)\to \Psi(a)\in(0,\infty),
\]
and, for every \(\varepsilon>0\),
\[
        \frac{1}{s_n^2(a)}
        \sum_{i=1}^n
        b_i(a)^2
        E\!\left[
        z_i^2
        \mathbf 1\{|b_i(a)z_i|>\varepsilon s_n(a)\}
        \mid X
        \right]
        \to 0 .
\]
\end{assumption}

\paragraph{Discussion.}
Assumption~\ref{ass:scalar_lindeberg} is a standard Lindeberg--Feller condition
for the triangular array
\[
    \{\,b_i(a)(u_i^2-\sigma_i^2)\,\}_{i=1}^n,
\]
which determines the asymptotic distribution of the estimated
heteroskedasticity relevance ratio. It ensures that no single observation
contributes a non-negligible fraction of the asymptotic variance, thereby
allowing the normalized weighted sum to satisfy a central limit theorem.

Unlike the classical Lindeberg condition for OLS itself, the present condition
is imposed on weighted squared residuals because inference is based on
estimating a quadratic variance functional rather than the regression
coefficients directly. The weights \(b_i(a)\) depend on the estimand through
directional leverage, reflecting the contrast-specific nature of the problem.

This assumption is standard in triangular-array central limit theory; see, for example,
\citet[Theorem~2.27]{vanderVaart1998} and
\citet[Chapter~27]{Billingsley1995}. Under uniform integrability of \(\{z_i^2\}\) and
\[
    \max_i |b_i(a)|/s_n(a)\to0,
\]
the condition is automatically satisfied. A convenient primitive sufficient
condition is
\[
    \sup_i E(|u_i|^{4+\delta}\mid X)<C
\]
for some \(\delta>0\), together with bounded regressors and balanced directional
leverage. Thus, Assumption
\ref{ass:scalar_lindeberg} plays the same role as the classical Lindeberg
condition in White's (1980) asymptotic theory, adapted here to the
contrast-specific quadratic form governing the heteroskedasticity relevance
ratio.

Unlike many asymptotic assumptions appearing in the paper, Assumption
\ref{ass:scalar_lindeberg} is not specific to heteroskedasticity itself.
Rather, it is a generic regularity condition ensuring asymptotic normality of
a weighted triangular array, and is therefore directly comparable to the
Lindeberg conditions routinely imposed in asymptotic regression theory.

\begin{theorem}[Asymptotic distribution of the sample ratio]
\label{thm:ratio_asynormal}
Under \Cref{ass:design,ass:errors,ass:leverage,ass:regressor_magnitude,%
ass:residual_remainder_clt,ass:scalar_lindeberg},
\[
        \frac{
        \sqrt n\{\widehat R(a)-R(a)\}
        }{
        \sqrt{\Psi_n(a)}
        }
        \Rightarrow N(0,1).
\]
Consequently,
\[
        \frac{
        \sqrt n\{\widehat S(a)-S(a)\}
        }{
        \sqrt{\Psi_n(a)/(4R(a))}
        }
        \Rightarrow N(0,1).
\]
\end{theorem}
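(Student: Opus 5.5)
The plan is to linearize the ratio around $(V_H(a),V_R(a))$ and reduce everything to the single weighted sum $\sum_i b_i(a)z_i$. The Lindeberg--Feller theorem then applies to this sum under \Cref{ass:scalar_lindeberg}. The delta method handles $S(a)$.

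\textbf{Step 1: remainders.} Write
\[
\widehat R(a)-R(a)=\frac{\widehat V_H(a)-R(a)\widehat V_R(a)}{\widehat V_R(a)},
\]
which is exact because $R(a)=V_H(a)/V_R(a)$. Using \Cref{ass:residual_remainder_clt}, I replace $\widehat V_H(a)$ by $\widetilde V_H(a)=\sum_i\ell_i(a)u_i^2$. I also replace $\widehat V_R(a)$ by $V_R(a)+A_n(a)n^{-1}\sum_i z_i$. Each replacement costs $o_p(n^{-3/2})$, and $R(a)$ is bounded by \Cref{cor:bounds} together with \Cref{ass:errors}, so the numerator becomes
\[
\sum_i\ell_i(a)u_i^2-R(a)V_R(a)-R(a)\frac{A_n(a)}{n}\sum_i z_i+o_p(n^{-3/2}).
\]
Since $\sum_i\ell_i(a)\sigma_i^2=V_H(a)=R(a)V_R(a)$, the numerator equals $\sum_i\ell_i(a)z_i-R(a)\frac{A_n(a)}{n}\sum_i z_i+o_p(n^{-3/2})$. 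This is exactly $T_n:=\sum_i b_i(a)z_i+o_p(n^{-3/2})$.

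\textbf{Step 2: denominator and scaling.} By \Cref{prop:ratio_consistency}, or directly from the consistency argument, $\widehat V_R(a)/V_R(a)\p1$. \Cref{ass:design,ass:errors} give $V_R(a)\asymp n^{-1}$, so
\[
\sqrt n\{\widehat R(a)-R(a)\}=\frac{\sqrt n\,T_n}{V_R(a)}\{1+o_p(1)\}.
\]
Dividing by $\sqrt{\Psi_n(a)}=\sqrt n\,s_n(a)/V_R(a)$ gives
\[
\frac{\sqrt n\{\widehat R(a)-R(a)\}}{\sqrt{\Psi_n(a)}}=\Big\{\frac{\sum_i b_i(a)z_i}{s_n(a)}+\frac{o_p(n^{-3/2})}{s_n(a)}\Big\}\{1+o_p(1)\}.
\]
Because $\Psi_n(a)\to\Psi(a)\in(0,\infty)$, we have $s_n(a)\asymp V_R(a)n^{-1/2}\asymp n^{-3/2}$. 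The remainder is therefore $o_p(1)$.

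\textbf{Step 3: CLT.} Conditional on $X$, the terms $b_i(a)z_i$ are independent with mean zero. Their variances sum to $s_n^2(a)$ by definition, and fourth moments exist by \Cref{ass:errors}. The Lindeberg condition in \Cref{ass:scalar_lindeberg} is precisely the Lindeberg--Feller hypothesis for this array. Hence $\sum_i b_i(a)z_i/s_n(a)\Rightarrow N(0,1)$, and Slutsky's lemma yields the first claim.

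\textbf{Step 4: the second claim.} For $S(a)$, write $\widehat S(a)-S(a)=\{\widehat R(a)-R(a)\}/\{\widehat S(a)+S(a)\}$. By consistency, $\widehat S(a)+S(a)=2S(a)+o_p(1)$. Also $R(a)$ is bounded away from zero, using $R(a)\ge\min_i\sigma_i^2/\bar\sigma^2$ or else by assuming $R(a)$ stays away from $0$. Then $\sqrt n\{\widehat S(a)-S(a)\}/\sqrt{\Psi_n(a)/(4R(a))}$ equals the first statistic times $2\sqrt{R(a)}/\{\widehat S(a)+S(a)\}\p1$.

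\textbf{Main obstacle.} The main obstacle is bookkeeping of orders in Steps 1--2. The remainders are only $o_p(n^{-3/2})$, so they must be compared against $s_n(a)\asymp n^{-3/2}$; the nondegeneracy $\Psi(a)>0$ is exactly what makes this work. A second delicate point is that the denominator replacement must be handled multiplicatively rather than additively. The positivity of $R(a)$ also needs a short justification, for instance if some $\sigma_i^2$ are allowed to be near zero.
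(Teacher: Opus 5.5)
Your proposal is correct and follows essentially the same route as the paper's proof. Both use the exact identity $\widehat R(a)-R(a)=\{\widehat V_H(a)-R(a)\widehat V_R(a)\}/\widehat V_R(a)$, reduce to $\sum_i b_i(a)z_i$ via \Cref{ass:residual_remainder_clt}, and then apply Lindeberg--Feller, Slutsky, and the square-root step. Your multiplicative handling of $\widehat V_R(a)$ and your explicit use of $R(a)=O(1)$ are, if anything, cleaner than the paper's ``$o_p(n^{-1})/O_p(n^{-1})$'' step.

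The positivity of $R(a)$ that you flag in Step~4 is simply taken for granted in the paper's delta-method step, but it follows from \Cref{ass:scalar_lindeberg} itself. Truncate at $|b_i(a)z_i|\le\varepsilon s_n(a)$ and use $\E|z_i|\le 2\sigma_i^2$ and $\sum_i|b_i(a)|\sigma_i^2\le 2V_H(a)$. This gives $s_n^2(a)\le o\{s_n^2(a)\}+4\varepsilon s_n(a)V_H(a)$, so $\sqrt{\Psi_n(a)}/\{\sqrt n\,R(a)\}=s_n(a)/V_H(a)\to0$. Hence $\widehat R(a)/R(a)\p1$ and your factor $2S(a)/\{\widehat S(a)+S(a)\}\p1$, without requiring $R(a)$ to be bounded away from zero.
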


\begin{proof}
See \Cref{app:thm_asynormal}.
\end{proof}

\begin{remark}[Feasible standard errors]
\label{rem:feasible_se}
A feasible version of $\Psi_n(a)$ replaces $R(a)$ by $\widehat R(a)$ and $\Var(u_i^2\mid X)$ by a smooth estimate $\hat\kappa_i$. Under near-normality, one may set $\hat\kappa_i=2\hat\sigma_i^4$, where $\hat\sigma_i^2$ is obtained from an auxiliary regression of $\hat u_i^2$ on functions of $x_i$ (for example, $1$ and $x_{1i}^2$ when the variance depends on $x_1^2$). The resulting estimator is
\[
    \widehat\Psi(a)
    =
    \frac{n\sum_{i=1}^n\hat b_i(a)^2\hat\kappa_i}{\widehat V_R(a)^2},
    \qquad
    \hat b_i(a)=\ell_i(a)-\widehat R(a)\frac{A_n(a)}{n}.
\]
This feasible estimator is not claimed to be optimal; it is a simple plug-in implementation of the variance formula. A fully nonparametric treatment of the variance-function estimator is left for future work.
\end{remark}

\section{Monte Carlo Study}
\label{sec:examples}

This section reports a simulation study designed so that each table corresponds directly
to a numbered theoretical result. We use three data-generating processes. The first
illustrates positive alignment and contrast-specific irrelevance under heteroskedasticity;
the second illustrates negative alignment; and the third illustrates the clustered-dependence
theory of \Cref{subsec:cluster}.

\subsection{Data-Generating Process 1: Contrast-Specific Heteroskedasticity}
\label{sec:dgp1}

Let $n$ observations be generated by
\begin{equation}
    y_i = 1 + 2x_{1i} - x_{2i} + u_i,
    \qquad
    u_i\mid X\sim N(0,\sigma_i^2),
    \qquad
    \sigma_i^2 = 1+\delta x_{1i}^2,
\label{eq:dgp1}
\end{equation}
where $x_{1i}\overset{\mathrm{iid}}{\sim}N(0,1)$, and $x_{2i}\in\{-1,+1\}$ is
assigned deterministically so that $x_{2i}$ is balanced and orthogonalized against
$x_{1i}^2$: observations are sorted by $x_{1i}^2$ and assigned alternating values
$+1,-1$, ensuring $\sum_i x_{2i}=0$ and $\sum_i x_{1i}^2 x_{2i}\approx 0$. Throughout
we set $\delta=3$. We consider three contrasts: the intercept ($a=(1,0,0)'$), the slope
on $x_1$ ($a=(0,1,0)'$), and the slope on $x_2$ ($a=(0,0,1)'$).

Since $\sigma_i^2$ depends only on $x_{1i}^2$, which by construction is uncorrelated
with $x_{2i}$, the design satisfies the following:
\begin{enumerate}
    \item the heteroskedasticity pattern is \emph{positively aligned} with the directional
    leverage $\ell_i(a)$ for the slope-$x_1$ contrast, so $R(a)>1$;
    \item the heteroskedasticity is \emph{orthogonal} to the directional leverage for the
    slope-$x_2$ contrast, so $R(a)\approx1$;
    \item the intercept contrast occupies an intermediate position, $R(a)\approx1$.
\end{enumerate}
DGP~1 therefore illustrates both the positive-alignment and orthogonality/irrelevance cases
side by side in a single regression, where heteroskedasticity matters for one coefficient and
is irrelevant for another. \Cref{sec:dgp_neg} below completes the picture with a third design
exhibiting negative alignment.

\paragraph{Table~\ref{tab:pop_quantities}: \Cref{thm:directional_identity,cor:bounds,rem:spectrum}.}
\Cref{tab:pop_quantities} reports population quantities computed from DGP~1 with $n=5{,}000$
observations (near the population limit). The identity $V_H(a)-V_R(a)=n\,\Cov_n(\ell_i(a),\sigma_i^2)$
(\Cref{thm:directional_identity}) is verified to machine precision ($\leq 10^{-8}$ relative error) in every
column. The ratio $R(a)$ lies strictly within the calibration bounds $[\min_i\sigma_i^2/\bar\sigma^2,\max_i\sigma_i^2/\bar\sigma^2]=[0.248,10.095]$ in every column
(\Cref{cor:bounds}). The slope-$x_1$ contrast gives $R(a)=2.37$: the heteroskedasticity-robust
standard error is $\sqrt{2.37}\approx1.54$ times the pooled benchmark for this coefficient.
The slope-$x_2$ contrast gives $R(a)=1.00$ to three decimal places: the same heteroskedasticity
that inflates inference on $\hat\beta_1$ is essentially irrelevant for $\hat\beta_2$, illustrating
\Cref{cor:irrelevance}.

\begin{table}[ht]
\centering
\caption{Population quantities under DGP~1 ($n=5{,}000$, $\delta=3$).
All entries are computed from the simulated population; see \Cref{eq:dgp1}.
Column ``Id.'' reports whether $V_H(a)-V_R(a)=n\,\Cov_n(\ell_i(a),\sigma_i^2)$
holds to $10^{-8}$ relative error (\Cref{thm:directional_identity}). Column ``Bd.'' reports whether
$R(a)\in[\min_i\sigma_i^2/\bar\sigma^2,\max_i\sigma_i^2/\bar\sigma^2]$ (\Cref{cor:bounds}).}
\label{tab:pop_quantities}
\begin{tabular}{lcccccc}
\toprule
Contrast & $V_H(a)$ & $V_R(a)$ & $R(a)$ & $n\,\Cov_n(\ell_i(a),\sigma_i^2)$ & Id.\ & Bd.\ \\
\midrule
Intercept          & $1.264\times10^{-3}$ & $1.265\times10^{-3}$ & $0.999$ & $-2.4\times10^{-10}$ & \checkmark & \checkmark \\
Slope $x_1$ (aligned)   & $3.000\times10^{-3}$ & $1.264\times10^{-3}$ & $2.373$ & $1.734\times10^{-6}$ & \checkmark & \checkmark \\
Slope $x_2$ (orthogonal) & $1.264\times10^{-3}$ & $1.265\times10^{-3}$ & $1.000$ & $-5.3\times10^{-11}$ & \checkmark & \checkmark \\
\bottomrule
\end{tabular}
\end{table}

\paragraph{Table~\ref{tab:consistency}: \Cref{prop:ratio_consistency} (Consistency).}
\Cref{tab:consistency} reports the mean and standard deviation of $\widehat R(a)$ across
$400$ independent data sets at four sample sizes. For every contrast and every $n$, the mean
tracks the population $R(a)$ closely and the standard deviation shrinks at the expected
$O(n^{-1/2})$ rate, illustrating \Cref{prop:ratio_consistency}. The slope-$x_2$ contrast
maintains $\widehat R(a)\approx 1$ throughout, while the slope-$x_1$ contrast converges to
$R(a)\approx 2.49$.

\begin{table}[ht]
\centering
\caption{Consistency of $\widehat R(a)$ under DGP~1 ($\delta=3$, $400$ replications).
Mean and standard deviation of $\widehat R(a)$ across replications; $R(a)$ is the population
value. Illustrates \Cref{prop:ratio_consistency}.}
\label{tab:consistency}
\begin{tabular}{lcccccc}
\toprule
 & \multicolumn{2}{c}{Intercept} & \multicolumn{2}{c}{Slope $x_1$} & \multicolumn{2}{c}{Slope $x_2$} \\
\cmidrule(lr){2-3}\cmidrule(lr){4-5}\cmidrule(lr){6-7}
$n$ & $\bar{\widehat R}$ & sd & $\bar{\widehat R}$ & sd & $\bar{\widehat R}$ & sd \\
\midrule
 $100$ & $0.988$ & $0.073$ & $2.254$ & $0.648$ & $0.982$ & $0.073$ \\
 $400$ & $0.996$ & $0.020$ & $2.430$ & $0.407$ & $0.998$ & $0.022$ \\
$1{,}600$ & $0.999$ & $0.005$ & $2.477$ & $0.217$ & $0.999$ & $0.006$ \\
$6{,}400$ & $1.000$ & $0.002$ & $2.493$ & $0.108$ & $1.000$ & $0.002$ \\
\midrule
$R(a)$ & \multicolumn{2}{c}{$1.000$} & \multicolumn{2}{c}{$2.490$} & \multicolumn{2}{c}{$1.000$} \\
\bottomrule
\end{tabular}
\end{table}

\paragraph{Table~\ref{tab:bootstrap}: Bootstrap interpretation (\Cref{sec:bootstrap}).}
\Cref{tab:bootstrap} compares the analytic $\widehat R(a)$ against the ratio of
bootstrap variance estimates $\mathrm{Var}^*(\hat\beta^*_P\mid X)/\mathrm{Var}^*(\hat\beta^*_R\mid X)$
from $B=8{,}000$ bootstrap replications, averaged over $300$ data sets at $n=400$. The
two quantities track each other closely for all three contrasts, illustrating \Cref{sec:bootstrap}'s claim
that the bootstrap heteroskedasticity relevance ratio consistently estimates the analytic ratio $\widehat R(a)$.
For the aligned contrast ($R\approx2.45$), pairs-bootstrap standard errors are on average
$\sqrt{2.45}\approx1.57$ times the residual-bootstrap standard errors; for the orthogonal
contrast, they are essentially equal.

\begin{table}[ht]
\centering
\caption{Bootstrap heteroskedasticity relevance ratios versus analytic $\widehat R(a)$ under DGP~1
($n=400$, $B=8{,}000$, $300$ data sets). The bootstrap ratio is
$\widehat{\mathrm{Var}}^*(\hat\beta_P^*\cdot a)/\widehat{\mathrm{Var}}^*(\hat\beta_R^*\cdot a)$.
Illustrates \Cref{sec:bootstrap}.}
\label{tab:bootstrap}
\begin{tabular}{lccc}
\toprule
Contrast & Mean $\widehat R(a)$ & Mean $\mathrm{Var}^*(P)/\mathrm{Var}^*(R)$ & Difference \\
\midrule
Intercept         & $0.998$ & $0.999$ & $+0.001$ \\
Slope $x_1$ (aligned)  & $2.454$ & $2.451$ & $-0.003$ \\
Slope $x_2$ (orthogonal)& $0.997$ & $0.999$ & $+0.002$ \\
\bottomrule
\end{tabular}
\end{table}

\paragraph{Table~\ref{tab:coverage}: Asymptotic normality (\Cref{thm:ratio_asynormal}).}
\Cref{tab:coverage} reports empirical coverage of nominal $95\%$ confidence intervals for
$R(a)$ of the slope-$x_1$ contrast, built as $\widehat R(a)\pm 1.96\sqrt{\widehat\Psi(a)/n}$
using a smoothed feasible estimator of $\Psi(a)$ (see \Cref{rem:feasible_se}). The feasible
estimator regresses $\hat u_i^2$ on $(1,x_{1i}^2)$ to obtain a smoothed first-stage
variance-function estimate $\hat\sigma_i^2$, then sets $\widehat{\mathrm{Var}}(u_i^2\mid X)\approx 2\hat\sigma_i^4$
assuming near-normality. Coverage rises from $94.1\%$ at $n=400$ to $96.4\%$ at $n=6{,}400$,
consistent with the nominal $95\%$ level and illustrating \Cref{thm:ratio_asynormal}. The
oracle coverage (replacing the feasible $\hat\Psi$ with the true $\Psi$ computed from known
$\sigma_i^2$) is $95.4\%$, illustrating the theory's correctness independently of any
estimation-noise in the feasible standard error.

\begin{table}[ht]
\centering
\caption{Empirical coverage of nominal $95\%$ confidence intervals for $R(a)$ of the
slope-$x_1$ contrast under DGP~1 ($\delta=3$, $800$ replications). Feasible SEs use a
smoothed variance-function estimator; oracle uses known $\sigma_i^2$.
Illustrates \Cref{thm:ratio_asynormal}.}
\label{tab:coverage}
\begin{tabular}{lccc}
\toprule
$n$ & $R(a)$ & Feasible coverage & Oracle coverage \\
\midrule
 $400$ & $2.49$ & $0.941$ & --- \\
$1{,}600$ & $2.50$ & $0.954$ & $0.954$ \\
$6{,}400$ & $2.45$ & $0.964$ & --- \\
\bottomrule
\end{tabular}
\end{table}

\subsection{Data-Generating Process 2: Negative Alignment ($R(a)<1$)}
\label{sec:dgp_neg}

The preceding design illustrates positive alignment, the case $R(a)>1$ where the
pairs-bootstrap standard error exceeds the residual-bootstrap standard error. This case
corresponds to the widely-held but incomplete intuition that robust inference is always
more conservative. The reverse is equally possible and is arguably the more
surprising finding of the theory.

We construct a simple regression with $n=100$ observations:
\begin{equation}
    y_i = 1+2x_i+u_i,
    \qquad
    u_i\mid X\sim N(0,\sigma_i^2),
\label{eq:dgp3}
\end{equation}
where $40$ observations are placed at $x_i\in\{-5,+5\}$ (high leverage) and $60$
observations at $x_i\in\{-1,+1\}$ (low leverage). The variance structure is
\textit{anti-aligned}:
\[
    \sigma_i^2=
    \begin{cases}
    \sigma_{\mathrm{lo}}^2=0.04 & \text{if }|x_i|=5\text{ (high leverage)},\\
    \sigma_{\mathrm{hi}}^2=9.00 & \text{if }|x_i|=1\text{ (low leverage)}.
    \end{cases}
\]
High-leverage observations are precisely those with the lowest variance.
By \Cref{thm:directional_identity}, $\Cov_n(\ell_i(a),\sigma_i^2)<0$ and thus $R(a)<1$.

\paragraph{Table~\ref{tab:neg_align}: Negative alignment ($R(a)<1$).}
\Cref{tab:neg_align} reports population quantities and the sampling behavior of
$\widehat R(a)$ across $500$ replications. The population ratio is
$R(\text{slope})=0.101$: the heteroskedasticity-robust standard error is roughly
$\sqrt{0.101}\approx0.32$ times the pooled standard error, meaning classical
or residual-bootstrap inference overestimates the uncertainty for this estimand by a
factor of more than three. The sample ratio converges reliably: mean $\widehat R(a)=0.113$,
standard deviation $0.014$.

This design demonstrates concretely that choosing between the residual bootstrap and the
pairs bootstrap cannot be resolved universally. For the design in
\Cref{sec:dgp1} the pairs bootstrap is substantially more conservative for
the slope on $x_1$ ($R\approx 2.37$). For the design here, the residual bootstrap
is more than three times more conservative for the slope. Both designs are internally
consistent with \Cref{thm:directional_identity}: the direction depends entirely on the
sign of $\Cov_n(\ell_i(a),\sigma_i^2)$.

\begin{table}[ht]
\centering
\caption{Negative-alignment design: DGP~2 ($n=100$, $40$ high-leverage/low-variance
points at $|x|=5$, $60$ low-leverage/high-variance points at $|x|=1$;
$500$ replications). Illustrates \Cref{thm:directional_identity} and \Cref{cor:irrelevance}.}
\label{tab:neg_align}
\begin{tabular}{lccccc}
\toprule
Quantity & Value \\
\midrule
$\bar\sigma^2$ & $5.416$ \\
$\Cov_n(\ell_i(a),\sigma_i^2)$ & $-4.59\times10^{-5}$ \\
$V_H(\text{slope})$ & $5.16\times10^{-4}$ \\
$V_R(\text{slope})$ & $5.11\times10^{-3}$ \\
$R(\text{slope}) = V_H/V_R$ & $0.101$ \\
$S(\text{slope}) = \sqrt{R}$ & $0.318$ \\
\midrule
Mean $\widehat R(\text{slope})$ over 500 draws & $0.113$ \\
SD $\widehat R(\text{slope})$ over 500 draws & $0.014$ \\
\bottomrule
\end{tabular}
\end{table}

\paragraph{Table~\ref{tab:size_efficiency}: Size and efficiency of HC0 versus classical inference.}
A natural objection to the entire framework is that the heteroskedasticity-robust (HC0)
estimator is asymptotically valid regardless of $R(a)$, so that $R(a)$ might seem irrelevant
to actual practice. \Cref{tab:size_efficiency} addresses this directly using two designs
already introduced. Under the anti-aligned design of \Cref{sec:dgp_neg} ($R(a)\approx0.085$,
$n=200$), the classical $t$-test for $H_0:\beta_{\text{slope}}=\beta_{\text{slope},0}$
has empirical size $0.000$ at the nominal $5\%$ level — severe under-rejection caused by
a standard error that is far too large for this contrast — while the HC0-based test
achieves size $0.039$, much closer to nominal. Under the orthogonal design of
\Cref{sec:dgp1} ($R(a)\approx0.999$ for the slope-$x_2$ contrast, $n=200$), the standard
deviation of the HC0 standard-error estimate across replications is $0.00896$, statistically
indistinguishable from the classical standard-error estimate's standard deviation of
$0.00898$ (ratio $0.998$): there is no efficiency cost to using HC0 for this contrast.
Together these results show that $R(a)$ does not call the validity of HC0 into question;
rather, it explains \emph{why} HC0 and classical inference agree or disagree for a given
contrast, and quantifies the size distortion of classical inference when they disagree.

\begin{table}[ht]
\centering
\caption{Size and relative efficiency of HC0 versus classical inference, by alignment
regime ($5{,}000$ replications, nominal $5\%$ level). Anti-aligned design from
\Cref{sec:dgp_neg}; orthogonal design (slope-$x_2$ contrast) from \Cref{sec:dgp1}.}
\label{tab:size_efficiency}
\begin{tabular}{lccc}
\toprule
Design & $R(a)$ & Empirical size: classical & Empirical size: HC0 \\
\midrule
Anti-aligned ($n=200$) & $0.085$ & $0.000$ & $0.039$ \\
\midrule
 & $R(a)$ & SD(classical SE) & SD(HC0 SE) \\
\midrule
Orthogonal ($n=200$) & $0.999$ & $0.00898$ & $0.00896$ \\
\bottomrule
\end{tabular}
\end{table}

\subsection{Data-Generating Process 3: Clustered Dependence}
\label{sec:dgp_cluster}

Let there be $G=40$ clusters of $m=10$ observations each ($n=400$). The model is
\[
    y_{ig} = 1 + 2x^{\mathrm{bet}}_g + (-1)x^{\mathrm{wit}}_{ig} + u_{ig},
\]
where $x^{\mathrm{bet}}_g\sim N(0,1)$ is a cluster-level regressor (constant within cluster $g$)
and $x^{\mathrm{wit}}_{ig}$ is observation-level, cluster-demeaned (so it carries no
between-cluster variation). The error structure satisfies
\[
    \mathrm{Cov}(u_{ig},u_{jg}\mid X)
    =
    \sigma_\eta^2\rho\cdot\mathbf{1}\{i\neq j\}
    +
    (\sigma_\eta^2\rho+\sigma_\varepsilon^2(1-\rho))\cdot\mathbf{1}\{i=j\},
\]
with $(\sigma_\eta^2,\sigma_\varepsilon^2,\rho)=(1.5^2,\,1.0^2,\,0.6)$. Errors across different
clusters are independent. The diagonal entries of $\Omega$ are constant across observations,
so $\Sigma-\bar\sigma^2I_n=0$ and the diagonal term in \eqref{eq:omega_minus_pooled} is
identically zero; the entire comparison between $V_\Omega(a)$ and $V_R(a)$ comes from the
off-diagonal (clustering) channel.

\paragraph{Table~\ref{tab:cluster}: \Cref{prop:omega_decomp} and \Cref{subsec:cluster}.}
\Cref{tab:cluster} reports $V_\Omega(a)$, $V_R(a)$, the diagonal term, the off-diagonal
term, and $R_C(a)=V_\Omega(a)/V_R(a)$ for each of the three contrasts. For the cluster-level
regressor $x^{\mathrm{bet}}$, $R_C(a)=7.94$, matching exactly the Moulton variance-inflation
factor $1+(m-1)\rho_u=7.94$ with intra-cluster correlation
$\rho_u=\sigma_\eta^2\rho/(\sigma_\eta^2\rho+\sigma_\varepsilon^2(1-\rho))=0.771$.
For the cluster-demeaned regressor $x^{\mathrm{wit}}$, $R_C(a)=0.229$: the clustering
structure \emph{reduces} the effective variance for this contrast by a factor of four,
because the within-cluster variation of $q_i(a)q_j(a)$ is negative when the regressor
varies within clusters.

The decomposition identity in \Cref{prop:omega_decomp} holds to machine precision (residual $<10^{-10}$)
for all three contrasts, illustrating \eqref{eq:omega_decomp}--\eqref{eq:omega_minus_pooled}.

\begin{table}[ht]
\centering
\caption{Contrast-specific covariance decomposition under DGP~3 ($G=40$, $m=10$, $\rho=0.60$).
$\mathrm{diag}$ and $\mathrm{offdiag}$ are the two terms in \eqref{eq:omega_minus_pooled}.
$\rho_u=0.771$, Moulton factor $1+(m-1)\rho_u=7.94$.
Illustrates \Cref{prop:omega_decomp} and \Cref{subsec:cluster}.}
\label{tab:cluster}
\begin{tabular}{lccccc}
\toprule
Contrast & $V_\Omega(a)$ & $V_R(a)$ & Diagonal term & Off-diagonal term & $R_C(a)$ \\
\midrule
Intercept                 & $0.0351$ & $0.0044$ & $0.0000$ & $+0.0305$ & $7.94$ \\
$x^{\mathrm{bet}}$ (cluster-level)  & $0.0485$ & $0.0061$ & $0.0000$ & $+0.0295$ & $7.94$ \\
$x^{\mathrm{wit}}$ (cluster-demeaned)& $0.0012$ & $0.0052$ & $0.0000$ & $-0.0038$ & $0.23$ \\
\bottomrule
\end{tabular}
\end{table}

\subsection{Discussion of Monte Carlo Results}

The Monte Carlo tables jointly illustrate the main theoretical results of the paper. Several features
deserve comment.

\emph{Contrast specificity.} The most striking pattern across all three DGPs is that the
same error structure has radically different consequences for different contrasts.
Under DGP~1, $R(a)=2.37$ for $a=(0,1,0)'$ (slope on $x_1$) and $R(a)=1.00$ for $a=(0,0,1)'$
(slope on $x_2$): the heteroskedasticity-robust standard error is $54\%$ larger than the pooled
benchmark for one contrast and essentially identical to it for another. Under DGP~3,
$R_C(a)=7.94$ for the cluster-level contrast and $R_C(a)=0.23$ for the within-cluster contrast.
An omnibus heteroskedasticity or clustering test would detect the departure from homoskedasticity
or independence in both cases; the contrast-specific framework shows that the practical
inferential consequences differ sharply by contrast.

\emph{Sign reversal under clustering.} Under DGP~3, the off-diagonal term is negative
for the within-cluster regressor, giving $R_C(a)<1$. This means the cluster-robust standard
error is \emph{smaller} than the pooled benchmark for this contrast. This is not a numerical artifact: when a regressor is demeaned within clusters, the influence weights $q_i(a)q_j(a)$
for observations in the same cluster have opposite signs, so the positive within-cluster
covariances $\Omega_{ij}>0$ multiply negative products and reduce the total variance. This
sign flip — and the corresponding reversal of the ordering between cluster-robust and pooled
standard errors — cannot be detected from the omnibus clustering structure alone.

\emph{Feasible inference.} The coverage results in Table~\ref{tab:coverage} show that
the asymptotic normality of $\widehat R(a)$ established in \Cref{thm:ratio_asynormal}
translates into reliable finite-sample inference with a reasonable feasible standard error,
approaching nominal coverage as $n$ grows. The key practical requirement is that the
fourth-moment structure $\mathrm{Var}(u_i^2\mid X)$ be estimated with enough smoothing to
avoid the high variance of raw $\hat u_i^4$ as a pointwise estimator; a parametric auxiliary
regression suffices here and, as \Cref{rem:feasible_se} notes, richer nonparametric smoothing
would be appropriate in samples with more information.

\section{Empirical Application: The Boston Housing Data}
\label{sec:empirical}

We illustrate the framework using the Boston Housing data of \citet{HarrisonRubinfeld1978},
the textbook example of heteroskedasticity in applied econometrics. The data record
$n=506$ census tracts in the Boston metropolitan area, with the median owner-occupied
home value (\texttt{medv}) and thirteen structural, neighborhood, and accessibility
characteristics. \citet{BelsleyKuhWelsch1980} use this exact regression to illustrate
diagnostics for influential observations, and \citet{BreuschPagan1979} use it to illustrate
their now-standard heteroskedasticity test; the regression is therefore a natural setting
in which an applied researcher would already know, from existing diagnostics, that
heteroskedasticity is present, and would already be using heteroskedasticity-robust
standard errors as a matter of course. Our question is whether that robustness adjustment
is materially important for each individual coefficient, or only for some.

\subsection{Specification}

We estimate the classic log-linear hedonic price equation,
\begin{equation}
    \log(\mathtt{medv}_i)
    =
    \beta_0+\beta_1\,\mathtt{crim}_i+\beta_2\,\mathtt{zn}_i+\beta_3\,\mathtt{indus}_i
    +\beta_4\,\mathtt{chas}_i+\beta_5\,\mathtt{nox}_i^2+\beta_6\,\mathtt{rm}_i^2
    +\beta_7\,\mathtt{age}_i
    \label{eq:hedonic}
\end{equation}
\[
    \qquad+\beta_8\log(\mathtt{dis}_i)+\beta_9\log(\mathtt{rad}_i)+\beta_{10}\,\mathtt{tax}_i
    +\beta_{11}\,\mathtt{ptratio}_i+\beta_{12}\,\mathtt{b}_i+\beta_{13}\log(\mathtt{lstat}_i)+u_i,
\]
following the functional form of \citet{HarrisonRubinfeld1978}: crime rate (\texttt{crim}),
proportion of residential land zoned for large lots (\texttt{zn}), proportion of non-retail
business acreage (\texttt{indus}), a Charles River dummy (\texttt{chas}), squared nitrogen
oxide concentration (\texttt{nox}, an air-pollution measure), squared average number of
rooms (\texttt{rm}), proportion of pre-1940 housing stock (\texttt{age}), log weighted
distance to employment centers (\texttt{dis}), log highway accessibility index
(\texttt{rad}), property tax rate (\texttt{tax}), pupil-teacher ratio (\texttt{ptratio}),
a Black population share transform (\texttt{b}), and log share of the population with
lower socioeconomic status (\texttt{lstat}). OLS gives $R^2=0.806$, with all coefficients
of the expected sign: crime, pollution, distance to employment, and lower-status share
enter negatively, while rooms, river access, and highway accessibility enter positively.

\subsection{Omnibus Heteroskedasticity}

A Breusch--Pagan test \citep{BreuschPagan1979}, regressing $\hat u_i^2$ on the full set
of regressors in \eqref{eq:hedonic}, gives $\mathrm{LM}=95.3$ on $13$ degrees of freedom
($p<10^{-15}$): heteroskedasticity is overwhelmingly rejected at the model level. This is
the omnibus answer an applied researcher would obtain from standard diagnostics, and it is
silent on which coefficients are actually affected.

\subsection{Contrast-Specific Results}

\Cref{tab:boston} reports $\widehat R(a)$, the corresponding standard-error inflation
factor $\widehat S(a)$, and both classical and robust (HC0) standard errors for each
slope coefficient in \eqref{eq:hedonic}. The results span almost the entire range
documented in the Monte Carlo study: from $\widehat R(\mathtt{zn})=0.55$, where the robust
standard error is $26\%$ \emph{smaller} than the classical one, to $\widehat R(\mathtt{crim})=3.45$,
where the robust standard error is $86\%$ \emph{larger}.

\begin{table}[ht]
\centering
\caption{Contrast-specific heteroskedasticity relevance ratios in the Boston Housing hedonic regression
\eqref{eq:hedonic} ($n=506$). $\mathrm{SE(cl)}$ and $\mathrm{SE(rb)}$ are the classical
and HC0 standard errors. The $95\%$ confidence interval for $R(a)$ uses the feasible
$\widehat\Psi(a)$ from \Cref{thm:ratio_asynormal} with a quadratic auxiliary regression
of $\hat u_i^2$ on the contrast's own regressor (\Cref{rem:feasible_se}); $p$ is the
two-sided test of $H_0:R(a)=1$.}
\label{tab:boston}
\begin{tabular}{lrrrrrrr}
\toprule
Variable & $\hat\beta$ & SE(cl) & SE(rb) & $\widehat R(a)$ & $\widehat S(a)$ & 95\% CI for $R(a)$ & $p$ \\
\midrule
crim     & $-0.0119$ & $0.0012$ & $0.0023$ & $3.452$ & $1.858$ & $[0.378,\,6.526]$ & $0.118$ \\
zn       & $\phantom{-}0.0001$ & $0.0005$ & $0.0004$ & $0.554$ & $0.744$ & $[0.351,\,0.757]$ & $0.000$ \\
indus    & $\phantom{-}0.0002$ & $0.0023$ & $0.0018$ & $0.566$ & $0.752$ & --- & --- \\
chas     & $\phantom{-}0.0914$ & $0.0327$ & $0.0350$ & $1.142$ & $1.068$ & --- & --- \\
nox$^2$  & $-0.6381$ & $0.1116$ & $0.1199$ & $1.155$ & $1.075$ & --- & --- \\
rm$^2$   & $\phantom{-}0.0063$ & $0.0013$ & $0.0020$ & $2.306$ & $1.518$ & $[1.514,\,3.097]$ & $0.001$ \\
age      & $\phantom{-}0.0001$ & $0.0005$ & $0.0006$ & $1.243$ & $1.115$ & --- & --- \\
log(dis) & $-0.1913$ & $0.0329$ & $0.0389$ & $1.393$ & $1.180$ & --- & --- \\
log(rad) & $\phantom{-}0.0957$ & $0.0189$ & $0.0191$ & $1.029$ & $1.014$ & --- & --- \\
tax      & $-0.0004$ & $0.0001$ & $0.0001$ & $0.837$ & $0.915$ & --- & --- \\
ptratio  & $-0.0311$ & $0.0049$ & $0.0040$ & $0.654$ & $0.809$ & --- & --- \\
b        & $\phantom{-}0.0004$ & $0.0001$ & $0.0001$ & $2.016$ & $1.420$ & --- & --- \\
log(lstat)& $-0.3712$ & $0.0247$ & $0.0370$ & $2.254$ & $1.501$ & $[1.847,\,2.662]$ & $0.000$ \\
\bottomrule
\end{tabular}
\end{table}

The largest source of heteroskedasticity relevance is the crime-rate coefficient. Binning
observations by quartile of \texttt{crim}, the mean squared residual is essentially flat
across the bottom three quartiles ($0.018$, $0.016$, $0.019$) and then rises sharply, by
a factor of four, in the top quartile ($0.076$). Because directional leverage for the
\texttt{crim} coefficient is concentrated in exactly the tracts with extreme crime rates,
this concentrated variance translates directly into a large $\widehat R(\mathtt{crim})$
through the identity in \Cref{thm:directional_identity}: the handful of high-crime tracts
are simultaneously the most informative observations for $\hat\beta_{\mathtt{crim}}$ and
the noisiest. The $95\%$ confidence interval is wide, $[0.378,6.526]$, and the test of
$H_0:R(\mathtt{crim})=1$ does not reject at conventional levels ($p=0.118$): with only
$n=506$ observations and the relevant variation concentrated in a small subset of tracts,
the data pin down the \emph{existence} of an effect more clearly than its \emph{magnitude}.
This is itself a useful empirical lesson distinct from the omnibus test: the wide interval
signals that the practical consequence of heteroskedasticity for this coefficient, while
plausibly large, is not estimated with much precision, a distinction the Breusch--Pagan
statistic cannot make.

By contrast, the coefficient on \texttt{zn} has $\widehat R(\mathtt{zn})=0.554$ with a tight
confidence interval, $[0.351,0.757]$, that excludes one ($p<0.001$): for this coefficient,
robust inference is substantially less conservative than classical inference, while
the classical standard error appears overly conservative for this contrast, in the sense of \Cref{rem:validity}. The coefficients on
\texttt{rm}$^2$ and $\log(\texttt{lstat})$ — the two structural variables most central to
the hedonic specification — both show large, tightly estimated $\widehat R(a)$ values
above $2$, illustrating that heteroskedasticity is genuinely consequential for the two
coefficients an applied researcher would consider central to the regression's interpretation.

\subsection{Discussion}

This application illustrates the paper's central point in a setting where
heteroskedasticity is not a stylized feature of a simulated design but a
long-documented property of a widely used dataset. The omnibus
Breusch--Pagan test correctly establishes that heteroskedasticity is present
in the regression as a whole; it does not, and by construction cannot,
indicate that the practical consequence ranges from a $26\%$
\emph{reduction} in the standard error for \texttt{zn} to an $86\%$
\emph{increase} for \texttt{crim} within the same regression. An applied
researcher who runs a single heteroskedasticity test and then mechanically
reports HC0 standard errors throughout the table is, for roughly half the
contrasts in \Cref{tab:boston}, applying a correction that moves the
standard error by less than $20\%$ in either direction; for the other half,
the correction is substantively important and, in the case of \texttt{zn},
moves in the opposite direction from what default practice would assume.
Once heteroskedasticity is detected, the practically relevant question is
therefore not whether to use robust standard errors, but which contrasts are
materially affected by the covariance structure — and in which direction.
\section{Practical Guidance}
\label{sec:practical}

The framework developed in this paper translates directly into applied practice
through the sample heteroskedasticity relevance ratio $\widehat R(a)$, which can be computed from
any regression as the ratio of the sandwich variance estimate to the pooled variance
estimate for the contrast of interest. We suggest the following interpretation.

\paragraph{When $\widehat R(a)\approx1$.}
The heteroskedastic and pooled benchmarks are essentially equivalent for this contrast.
Robust and classical standard errors will be nearly identical, and \Cref{rem:validity}
shows there is no efficiency cost to using either. The researcher may use whichever
procedure is standard in the field without material consequence.

\paragraph{When $\widehat R(a)\gg1$.}
Heteroskedasticity substantially inflates the uncertainty for this contrast.
The robust standard error exceeds the classical one by a factor of $\widehat S(a)=\sqrt{\widehat R(a)}$.
Reporting the classical standard error would understate inference uncertainty.
Robust inference is materially preferable for this estimand, though \Cref{rem:validity}
notes that the increased sampling variability of the robust standard-error estimate itself
can reduce power relative to a (now invalid) classical benchmark, consistent with
\citet{KingRoberts2015}.

\paragraph{When $\widehat R(a)\ll1$.}
This is the less intuitive case. Heteroskedasticity is present but its pattern is
anti-aligned with the estimand: high-variance observations have low directional leverage
for $a'\beta$ and vice versa. The pooled or residual-bootstrap standard error is the
more conservative one. Applying robust inference actually produces \emph{narrower}
confidence intervals than the classical procedure. As \Cref{rem:validity} and \Cref{tab:size_efficiency} show concretely,
classical inference in this regime may become excessively conservative:
the standard error is inflated so much that the test can severely under-reject.
Thus, the researcher should not treat the larger classical standard error as
the safe default merely because it is larger.

\paragraph{Contrast specificity.}
Crucially, $\widehat R(a)$ is contrast-specific. The same regression may yield
$\widehat R(a_1)\gg1$ for one coefficient and $\widehat R(a_2)\approx1$ for another.
The appropriate diagnostic is to compute $\widehat R(a)$ for each estimand of interest,
not to apply a single omnibus heteroskedasticity test and then treat all coefficients uniformly.

\paragraph{Bootstrap choice.}
The ratio $\widehat R(a)$ also determines which bootstrap procedure should be preferred.
When $\widehat R(a)>1$, the pairs bootstrap is more conservative; when $\widehat R(a)<1$,
the residual bootstrap is more conservative. A researcher who systematically uses the pairs
bootstrap may be over-inflating standard errors for some contrasts while under-inflating
them for others within the same regression.

\paragraph{Formal inference on $\widehat R(a)$.}
\Cref{thm:ratio_asynormal} makes it possible to construct a confidence interval for
$R(a)$ and to test $H_0:R(a)=1$ (no effective heteroskedasticity for this estimand).
A rejection of this null signals that the choice between robust and classical inference
is material for the specific contrast under study, even if the test has no power for
contrasts orthogonal to the variance profile. Implementation requires a smooth estimate
of $\mathrm{Var}(u_i^2\mid X)$, for which an auxiliary regression of $\hat u_i^2$ on
functions of $x_i$ provides a convenient starting point (see \Cref{rem:feasible_se}).

\section{Conclusion}
\label{sec:conclusion}

This paper develops a contrast-specific theory of robust inference. The central finding
is that whether heteroskedasticity matters for a given estimand is determined by the
alignment between the error-variance profile and the directional leverage of the contrast.
The heteroskedasticity relevance ratio $R(a)=V_H(a)/V_R(a)$ summarizes this alignment in a single interpretable
number; it equals one under exact irrelevance, exceeds one when robust inference is more
conservative, and falls below one in the non-obvious case where the residual bootstrap or
classical inference is more conservative. When \(R(a)<1\), the larger classical standard error should not be interpreted
as providing a safer inferential procedure; it reflects a mismatch between the
location of uncertainty and the location of information for the estimand,
whereas the robust variance continues to target the correct asymptotic
variance. 
 Four main results were established: the exact
directional identity (\Cref{thm:directional_identity}), the information-weighting representation
(\Cref{prop:weighting}), a scalar asymptotic distribution via a Lindeberg--Feller argument and a delta-method step (\Cref{thm:ratio_asynormal}), and the generalization
to clustered and dependent settings (\Cref{prop:omega_decomp}). Applied to the Boston Housing
data of \citet{HarrisonRubinfeld1978} in \Cref{sec:empirical}, the framework shows that a
single, strongly significant omnibus heteroskedasticity test conceals a roughly seven-fold
range in $\widehat R(a)$ across coefficients in the same regression, from a $26\%$ reduction
in the standard error for one coefficient to an $86\%$ increase for another.

Extensions remain open. A nonparametric theory for the variance-function estimator
entering the feasible $\widehat\Psi(a)$ would complete the inferential framework. Extending
the directional identity to the off-diagonal term $\sum_{i\neq j}q_i(a)q_j(a)\Omega_{ij}$
under general spatial, network, or factor dependence is a natural next step. Finite-sample
refinements via Edgeworth expansions may sharpen coverage below $n=400$.

\appendix
\section{Proofs of Auxiliary Results}
\label{app:proofs}

\subsection{Proof of \Cref{prop:cov_relevance}}
\label{app:thm1}

Premultiplying \eqref{eq:AL} by $a'$ gives
\[
    \sqrt n(a'\hat\theta-a'\theta_0)
    =
    \frac{1}{\sqrt n}\sum_{i=1}^n a'\psi_i+o_p(1).
\]
The asymptotic variance is therefore $a'\Omega a$.
Substituting $\Omega=\Omega_0+\Delta$ yields $a'\Omega a=a'\Omega_0a+a'\Delta a$.
The difference between the perturbed and baseline asymptotic variances is $a'\Delta a$, proving the result.
\hfill$\square$

\subsection{Proof of \Cref{lem:sum_ell}}
\label{app:lem1}

By definition, $\sum_{i=1}^n\ell_i(a)=a'(X'X)^{-1}(\sum_ix_ix_i')(X'X)^{-1}a
=a'(X'X)^{-1}(X'X)(X'X)^{-1}a=a'(X'X)^{-1}a$.
\hfill$\square$

\subsection{Proof of \Cref{prop:weighting}}
\label{app:prop1}

From \Cref{thm:directional_identity}, $V_H(a)=\sum_i\ell_i(a)\sigma_i^2$ and $V_R(a)=\bar\sigma^2\sum_i\ell_i(a)$. Dividing and substituting $w_i(a)=\ell_i(a)/\sum_j\ell_j(a)$ gives \eqref{eq:weighting}. Subtracting one from both sides gives \eqref{eq:R_minus_one}.
\hfill$\square$

\subsection{Proof of \Cref{cor:bounds}}
\label{app:cor_bounds}

By \Cref{prop:weighting}, $R(a)=\sum_iw_i(a)\cdot(\sigma_i^2/\bar\sigma^2)$, a convex combination of the ratios $\sigma_i^2/\bar\sigma^2$ with weights $w_i(a)\ge0$ summing to one. The inequality follows directly.
\hfill$\square$

\subsection{Proof of \Cref{prop:omega_decomp}}
\label{app:prop2}

Using \eqref{eq:Vomega},
\[
    V_\Omega(a)=
    \sum_{i,j}
    \{a'(X'X)^{-1}x_i\}\{a'(X'X)^{-1}x_j\}\Omega_{ij}
    =
    \sum_{i,j}q_i(a)q_j(a)\Omega_{ij}.
\]
Separating diagonal ($i=j$) and off-diagonal ($i\neq j$) terms gives \eqref{eq:omega_decomp}.
Since $V_R(a)=\bar\sigma^2\sum_i\ell_i(a)=\bar\sigma^2\sum_iq_i(a)^2$, subtracting yields \eqref{eq:omega_minus_pooled}.
\hfill$\square$

\subsection{Matrix form of \Cref{thm:directional_identity}}
\label{app:matrix}

Setting $V_H=(X'X)^{-1}X'\Sigma X(X'X)^{-1}$ and $V_R=\bar\sigma^2(X'X)^{-1}$,
\[
    V_H-V_R
    =
    (X'X)^{-1}X'(\Sigma-\bar\sigma^2 I_n)X(X'X)^{-1}.
\]
Premultiplying and postmultiplying by $a$ gives $V_H(a)-V_R(a)=\sum_i\ell_i(a)(\sigma_i^2-\bar\sigma^2)$, which is \eqref{eq:directional_identity}.

\subsection{Directional leverage and White-type regressors}
\label{app:white}

Let $c=(X'X)^{-1}a$. Then $\ell_i(a)=(c'x_i)^2$. The directional covariance in \Cref{thm:directional_identity} is therefore the empirical covariance between $\hat u_i^2$ and the square of the scalar index $c'x_i$, in contrast to an omnibus heteroskedasticity diagnostic that tests whether $\hat u_i^2$ depends on all components and cross-products of $x_i$.

\subsection{Proof of \Cref{prop:ratio_consistency}}
\label{app:prop_consistency}

Let $\widetilde V_H(a)=\sum_{i=1}^n\ell_i(a)u_i^2$ be the oracle version of $\widehat V_H(a)$
(using true errors rather than OLS residuals).

\paragraph{Step 1: Oracle variance.}
Conditional on $X$, $\E\{\widetilde V_H(a)\mid X\}=V_H(a)$ and
\[
    \Var\{\widetilde V_H(a)\mid X\}
    =
    \sum_{i=1}^n\ell_i(a)^2\Var(u_i^2\mid X)
    \le C\sum_{i=1}^n\ell_i(a)^2.
\]
The key inequality
\[
    \sum_i\ell_i(a)^2\le \max_i\ell_i(a)\cdot\sum_i\ell_i(a)
    =
    \max_i w_i(a)\cdot\bigl(a'(X'X)^{-1}a\bigr)^2
\]
together with $a'(X'X)^{-1}a=O(n^{-1})$ and $\max_iw_i(a)\to0$ (\Cref{ass:leverage}) gives
$\Var\{\widetilde V_H(a)\mid X\}=o(n^{-2})$.
By Chebyshev's inequality,
\[
    \widetilde V_H(a)-V_H(a)=o_p\!\left(a'(X'X)^{-1}a\right)=o_p(n^{-1}).
\]

\paragraph{Step 2: Residual estimation error in $\widehat V_H(a)$.}
Writing $\hat u_i-u_i=-x_i'(\hat\beta-\beta)$,
\[
    \hat u_i^2-u_i^2
    =
    -2u_ix_i'(\hat\beta-\beta)+\{x_i'(\hat\beta-\beta)\}^2.
\]
By \Cref{ass:design,ass:errors}, $\|\hat\beta-\beta\|=O_p(n^{-1/2})$, since
$\hat\beta-\beta=(X'X)^{-1}X'u$ and $\lambda_{\min}(n^{-1}X'X)\ge c>0$.

\emph{Linear term.} Two applications of the Cauchy--Schwarz inequality give
\[
    \left|\sum_i\ell_i(a)u_ix_i'(\hat\beta-\beta)\right|
    \le
    \|\hat\beta-\beta\|\!
    \left(\sum_i\ell_i(a)^2\|x_i\|^2\right)^{1/2}\!\!
    \left(\sum_i u_i^2\right)^{1/2}.
\]
Using $\sum_i\ell_i(a)^2\|x_i\|^2\le\sup_i\|x_i\|^2\max_iw_i(a)\cdot(a'(X'X)^{-1}a)^2$,
$a'(X'X)^{-1}a=O(n^{-1})$, and $\sum_iu_i^2=O_p(n)$ (from $\bar\sigma^2=O(1)$ and
independence), the linear term equals
\[
    O_p(n^{-1/2})\cdot O(n^{-1})\sqrt{\sup_i\|x_i\|^2\max_iw_i(a)}\cdot O_p(n^{1/2})
    =O_p(n^{-1})\sqrt{\sup_i\|x_i\|^2\max_iw_i(a)}
    =o_p(n^{-1}),
\]
by \Cref{ass:regressor_magnitude}.

\emph{Quadratic term.} Directly,
\[
    \sum_i\ell_i(a)\{x_i'(\hat\beta-\beta)\}^2
    \le\|\hat\beta-\beta\|^2\sup_i\|x_i\|^2\sum_i\ell_i(a)
    =O_p(n^{-1})\cdot\sup_i\|x_i\|^2\cdot O(n^{-1}).
\]
Under \Cref{ass:regressor_magnitude} and \Cref{ass:leverage},
$\sup_i\|x_i\|^2\cdot\max_iw_i(a)\to0$ and $\max_iw_i(a)\ge n^{-1}$ (since weights sum to one),
so $\sup_i\|x_i\|^2=o(n)$.  Hence the quadratic term is $O_p(n^{-1})\cdot o(n)\cdot O(n^{-1})=o_p(n^{-1})$.

Combining, $\widehat V_H(a)-\widetilde V_H(a)=o_p(n^{-1})$, so $\widehat V_H(a)-V_H(a)=o_p(n^{-1})$.

\paragraph{Step 3: Pooled variance.}
Write $\hat u_i^2-u_i^2=-2u_ix_i'(\hat\beta-\beta)+\{x_i'(\hat\beta-\beta)\}^2$.
Taking the average, $n^{-1}\sum_i(\hat u_i^2-u_i^2)=T_1+T_2$ where
\[
    |T_1|=\frac{2}{n}\left|\sum_i u_ix_i'\right|\cdot\|\hat\beta-\beta\|
    =O_p(n^{-1/2})\cdot O_p(n^{-1/2})=O_p(n^{-1}),
\]
using $n^{-1/2}\|\sum_ix_iu_i\|=O_p(1)$ and $\|\hat\beta-\beta\|=O_p(n^{-1/2})$, and
\[
    |T_2|
    \le
    \|\widehat\beta-\beta\|^2\sup_i\|x_i\|^2
    =
    O_p(n^{-1})\sup_i\|x_i\|^2
    =
    o_p(1),
\]
Hence $n^{-1}\sum_i\hat u_i^2=n^{-1}\sum_iu_i^2+o_p(1)$.
Since $n^{-1}\sum_iu_i^2\p\bar\sigma^2$ by the law of large numbers
(the $u_i$ are uncorrelated with uniformly bounded variance, so $\bar\sigma^2=O(1)$),
we have $\hat\sigma^2=n^{-1}\sum_i\hat u_i^2\p\bar\sigma^2$, so $\hat\sigma^2-\bar\sigma^2=o_p(1)$, and
\[
    \widehat V_R(a)-V_R(a)=(\hat\sigma^2-\bar\sigma^2)\,a'(X'X)^{-1}a
    =o_p(1)\cdot O(n^{-1})=o_p(n^{-1}).
\]

\paragraph{Step 4: Ratio convergence.}
Since $V_R(a)=\bar\sigma^2 a'(X'X)^{-1}a\ge c\cdot cn^{-1}\cdot c=c^3n^{-1}>0$ (from
\Cref{ass:design,ass:errors}), the map $(u,v)\mapsto u/v$ is continuously differentiable in
a neighborhood of $(V_H(a),V_R(a))$. With $\widehat V_H(a)-V_H(a)=o_p(n^{-1})$ and
$\widehat V_R(a)-V_R(a)=o_p(n^{-1})$, the continuous-mapping theorem gives
$\widehat R(a)-R(a)\p0$. The result for $\widehat S(a)=\sqrt{\widehat R(a)}$ follows again
by the continuous-mapping theorem.
\hfill$\square$

\subsection{A sufficient primitive condition for \Cref{ass:residual_remainder_clt}}
\label{app:clt_remainder}

This subsection explains why \Cref{ass:residual_remainder_clt} is a standard high-level condition. Suppose, in addition to \Cref{ass:design,ass:errors}, that $\sup_i\|x_i\|\le C$ and $\max_i\ell_i(a)=O(n^{-2})$. Then $\widehat V_H(a)-\widetilde V_H(a)=o_p(n^{-3/2})$ and the analogous pooled-variance remainder in \Cref{ass:residual_remainder_clt} is $o_p(n^{-3/2})$.

Indeed,
\[
\widehat V_H(a)-\widetilde V_H(a)
=
-2(\widehat\beta-\beta)'\sum_{i=1}^n\ell_i(a)x_i u_i
+(\widehat\beta-\beta)'\left(\sum_{i=1}^n\ell_i(a)x_ix_i'\right)(\widehat\beta-\beta).
\]
By \Cref{ass:design,ass:errors}, $\|\widehat\beta-\beta\|=O_p(n^{-1/2})$. Conditional on $X$,
\[
\Var\left(\sum_i\ell_i(a)x_i u_i\mid X\right)
\le C\sum_i\ell_i(a)^2\|x_i\|^2
=O(n^{-3}),
\]
so $\sum_i\ell_i(a)x_i u_i=O_p(n^{-3/2})$. The linear part is therefore $O_p(n^{-2})$. For the quadratic part,
\[
\left\|\sum_i\ell_i(a)x_ix_i'\right\|
\le C\sum_i\ell_i(a)=O(n^{-1}),
\]
so the quadratic part is $O_p(n^{-2})$. Hence $\widehat V_H(a)-\widetilde V_H(a)=O_p(n^{-2})=o_p(n^{-3/2})$.

For the pooled variance,
\[
\widehat V_R(a)-V_R(a)-A_n(a)n^{-1}\sum_i(u_i^2-\sigma_i^2)
=
A_n(a)n^{-1}\sum_i(\hat u_i^2-u_i^2).
\]
The average residual-estimation error satisfies $n^{-1}\sum_i(\hat u_i^2-u_i^2)=O_p(n^{-1})$ under the same fixed-design conditions, while $A_n(a)=O(n^{-1})$. Thus this remainder is $O_p(n^{-2})=o_p(n^{-3/2})$.

\subsection{Proof of \Cref{thm:ratio_asynormal}}
\label{app:thm_asynormal}

Let $A_n(a)=a'(X'X)^{-1}a$, $z_i=u_i^2-\sigma_i^2$, and $b_i(a)=\ell_i(a)-R(a)A_n(a)/n$.

\paragraph{Step 1: Exact decomposition.}
The identity $V_H(a)=R(a)V_R(a)$ gives the algebraically exact representation
\begin{equation}
    \widehat R(a)-R(a)
    =
    \frac{\widehat V_H(a)-R(a)\widehat V_R(a)}{\widehat V_R(a)}.
\label{eq:exact_id}
\end{equation}
By \Cref{ass:residual_remainder_clt},
\[
    \widehat V_H(a)-V_H(a)=\sum_{i=1}^n\ell_i(a)z_i+o_p(n^{-3/2}),
    \qquad
    \widehat V_R(a)-V_R(a)=\frac{A_n(a)}{n}\sum_{i=1}^n z_i+o_p(n^{-3/2}).
\]
Substituting into the numerator of \eqref{eq:exact_id} and using $V_H(a)=R(a)V_R(a)$:
\begin{align*}
    \widehat V_H(a)-R(a)\widehat V_R(a)
    &=\bigl[\widehat V_H(a)-V_H(a)\bigr]-R(a)\bigl[\widehat V_R(a)-V_R(a)\bigr]\\
    &=\sum_{i=1}^n\Bigl[\ell_i(a)-R(a)\frac{A_n(a)}{n}\Bigr]z_i+o_p(n^{-3/2})
    =\sum_{i=1}^n b_i(a)z_i+o_p(n^{-3/2}).
\end{align*}
Inserting back into \eqref{eq:exact_id}:
\begin{equation}
    \widehat R(a)-R(a)
    =
    \frac{\sum_{i=1}^n b_i(a)z_i+o_p(n^{-3/2})}{\widehat V_R(a)}.
\label{eq:R_decomp}
\end{equation}

\paragraph{Step 2: Scalar CLT.}
The summands $b_i(a)z_i$ are conditionally independent and mean-zero given $X$
(since $\E(z_i\mid X)=0$), with conditional variance
$b_i(a)^2\Var(u_i^2\mid X)$ summing to $s_n^2(a)$.
By \Cref{ass:scalar_lindeberg}, $\Psi_n(a)\to\Psi(a)\in(0,\infty)$, so $s_n(a)>0$ for all large $n$.
The Lindeberg condition in \Cref{ass:scalar_lindeberg} is exactly that for the
scalar triangular array $\{b_i(a)z_i/s_n(a)\}_{i=1}^n$.
The Lindeberg--Feller central limit theorem \citep[Theorem~27.2]{Billingsley1995} gives
\begin{equation}
    \frac{\sum_{i=1}^n b_i(a)z_i}{s_n(a)}\dlim N(0,1).
\label{eq:clt_scalar}
\end{equation}

\paragraph{Step 3: Combining and Slutsky.}
Since $\sqrt{\Psi_n(a)}=\sqrt{n}\,s_n(a)/V_R(a)$ and $V_R(a)>0$, dividing
\eqref{eq:R_decomp} by $\sqrt{\Psi_n(a)/n}$ gives
\[
    \frac{\sqrt{n}\{\widehat R(a)-R(a)\}}{\sqrt{\Psi_n(a)}}
    =
    \frac{V_R(a)}{\widehat V_R(a)}
    \frac{\sum_{i=1}^n b_i(a)z_i}{s_n(a)}
    +
    \frac{\sqrt n\, r_n}{\widehat V_R(a)\sqrt{\Psi_n(a)}},
\]
where \(r_n=o_p(n^{-3/2})\). Since
\[
    \widehat V_R(a)=O_p(n^{-1}),
    \qquad
    \Psi_n(a)\to\Psi(a)\in(0,\infty),
\]
we have
\[
    \frac{\sqrt n\, r_n}{\widehat V_R(a)\sqrt{\Psi_n(a)}}
    =
    \frac{o_p(n^{-1})}{O_p(n^{-1})}
    =
    o_p(1).
\]
By \Cref{prop:ratio_consistency}, $V_R(a)/\widehat V_R(a)\p1$.
Combined with \eqref{eq:clt_scalar} and Slutsky's theorem,
\[
    \frac{\sqrt{n}\{\widehat R(a)-R(a)\}}{\sqrt{\Psi_n(a)}}
    \dlim
    N(0,1).
\]
\paragraph{Step 4: Result for $\widehat S(a)$.}
By the delta method applied to $h(r)=\sqrt{r}$ at $r=R(a)>0$ with $h'(R(a))=1/(2\sqrt{R(a)})$,
\[
    \sqrt{n}\{\widehat S(a)-S(a)\}
    =
    \frac{1}{2\sqrt{R(a)}}\sqrt{n}\{\widehat R(a)-R(a)\}+o_p(1),
\]
which gives $\sqrt{n}\{\widehat S(a)-S(a)\}/\sqrt{\Psi_n(a)/(4R(a))}\dlim N(0,1)$.
\hfill$\square$

\bibliographystyle{plainnat}
\bibliography{references}

\end{document}